\tikzset{snake it/.style={decorate, decoration=snake}}
\newcommand{\OO}{\mathcal{O}}
\newcommand{\cO}{\mathcal{O}}
\theoremstyle{plain}
\newtheorem{observation}[theorem]{Observation}
\mathchardef\myhyphen="2D
\newcommand{\yes}{\textsf{Yes}}
\newcommand{\no}{\textsf{No}}
\newcommand{\tw}{\mathtt{tw}}
\newcommand{\NP}{\textsf{NP}}
\newcommand{\ETH}{\textsf{ETH}}
\newcommand{\alg}{\textsf{ALG}}
\newcommand{\nil}{\ensuremath{\mathtt{nil}}}
\newcommand{\ma}{\ensuremath{\mathsf{Mark}}}
\newcommand{\probKPath}{\textsc{Long Path}\xspace}
\newcommand{\probKCycle}{\textsc{Long Cycle}\xspace}
\newcommand{\probFVS}{\textsc{Feedback Vertex Set}\xspace}
\newcommand{\probCycPacking}{\textsc{Cycle Packing}\xspace}
\title{ETH-Tight Algorithms for Long Path and Cycle on Unit Disk Graphs}
\titlerunning{ETH-Tight Algorithms for Long Path and Cycle on Unit Disk Graphs}
\author{Fedor V. Fomin}{University of Bergen, Bergen, Norway, fomin@ii.uib.no}{}{}{Research Council of Norway via the project MULTIVAL.}
\author{Daniel Lokshtanov}{University of California, Santa Barbara, USA, daniello@ucsb.edu}{}{}{ European Research Council (ERC) under the European Union’s Horizon
2020 research and innovation programme (grant no. 715744), and United States - Israel Binational Science Foundation grant no. 2018302.}
\author{Fahad Panolan}{Department of Computer Science and Engineering, IIT Hyderabad, Hyderbad, India, fahad@iith.ac.in}{}{}{}
\author{Saket Saurabh}{Department of Informatics, University of Bergen, Norway\\ The Institute of Mathematical Sciences, HBNI and IRL 2000 ReLaX, Chennai, India, saket@imsc.res.in}{}{}{European Research Council (ERC) under the European Union’s Horizon 2020 research and innovation programme (grant no. 819416), and Swarnajayanti Fellowship
grant DST/SJF/MSA-01/2017-18.}
\author{Meirav Zehavi}{Ben-Gurion University of the Negev, {Beer-Sheva, Israel}, meiravze@bgu.ac.il}{}{}{Israel Science Foundation grant no. 1176/18, and United States – Israel Binational Science Foundation grant no. 2018302.}
\authorrunning{F.V. Fomin, D. Lokshtanov, F. Panolan, S. Saurabh, M. Zehavi}
\keywords{Optimality Program, ETH, Unit Disk Graphs, Parameterized Complexity, Long Path, Long Cycle}
\begin{document}

\maketitle

\begin{abstract} 
We present an algorithm for the extensively studied {\sc Long Path} and {\sc Long Cycle} problems on unit disk graphs that runs in time $2^{\OO(\sqrt{k})}(n+m)$. Under the Exponential Time Hypothesis, {\sc Long Path} and {\sc Long Cycle} on unit disk graphs cannot be solved in time $2^{o(\sqrt{k})}(n+m)^{\OO(1)}$ [de Berg et al., STOC 2018], hence our algorithm is optimal. Besides the $2^{\OO(\sqrt{k})}(n+m)^{\OO(1)}$-time algorithm for the (arguably) much simpler {\sc Vertex Cover} problem by de Berg et al.~[STOC 2018] (which easily follows from the existence of a $2k$-vertex kernel for the problem), {\em this is the only known ETH-optimal fixed-parameter tractable algorithm on UDGs}. Previously, {\sc Long Path} and {\sc Long Cycle} on unit disk graphs were only known to be solvable in time $2^{\OO(\sqrt{k}\log k)}(n+m)$. This algorithm involved the introduction of a new type of a tree decomposition, entailing the design of a very tedious dynamic programming procedure. Our algorithm is substantially simpler: we completely avoid the use of this new type of tree decomposition. Instead, we use a marking procedure to reduce the problem to (a weighted version of) itself on a standard tree decomposition of width $\OO(\sqrt{k})$.
\end{abstract}

\section{Introduction}\label{sec:intro}

Unit disk graphs are the intersection graphs of disks of radius $1$ in the Euclidean plane. That is, given $n$ disks of radius $1$, we represent each disk by a vertex, and insert an edge  between two vertices if and only if their corresponding disks intersect.
Unit disk graphs form one of the best studied graph classes in computational geometry because of their use in modelling optimal facility location~\cite{wang1988study} and broadcast networks such as wireless, ad-hoc and sensor networks \cite{hale1980frequency,kammerlander1984c,yeh1984outage}. 
These applications have led to an extensive study of NP-complete problems on unit disk graphs in the realms of computational complexity and approximation algorithms. We refer the reader 
to~\cite{ClarkCJ90,DumitrescuP11,HuntMRRRS98} and the citations therein for these studies. However, these problems remain hitherto unexplored in the light of parameterized complexity with exceptions that are few and far between~\cite{alber2002geometric,Chan03,FominLS12,Jansen10,SmithW98}. 

We study the {\sc Long Path} (resp.~{\sc Long Cycle}) problem on unit disk graphs. Here, given a graph $G$ and an integer $k$, the objective is to decide whether $G$ contains a path (resp.~cycle) on at least $k$ vertices. To the best of our knowledge,  the {\sc Long Path} problem is among the five most extensively studied problems in Parameterized Complexity~\cite{DBLP:books/sp/CyganFKLMPPS15,DBLP:series/txcs/DowneyF13} (see Section \ref{sec:relatedWorks}). One of the best known open problems in Parameterized Complexity was to develop a $2^{\OO(k)}n^{\OO(1)}$-time algorithm for {\sc Long Path} on general graphs~\cite{DBLP:conf/coco/PapadimitriouY93}, that is, shaving the $\log k$ factor in the exponent of the previously best $2^{\OO(k
\log k)}n^{\OO(1)}$-time parameterized algorithm for this problem on general graphs~\cite{monien1985find}. This was resolved in the positive in the seminal work by Alon, Yuster and Zwick on color coding 25 years ago~\cite{AlonYZ}, which was recently awarded the IPEC-NERODE prize for the most outstanding research in Parameterized Complexity. In particular, the aforementioned $2^{\OO(k)}n^{\OO(1)}$-time algorithm for {\sc Long Path} on general graphs is optimal under the Exponential Time Hypothesis (\ETH).

Both {\sc Long Path} and {\sc Long Cycle} are known to be \NP-hard on unit disk graphs~\cite{ItaiPS82}, and cannot be solved in time  $2^{o(\sqrt{n})}$(and hence also in time $2^{o(\sqrt{k})}n^{\OO(1)}$) on unit disk graphs unless the \ETH\ fails~\cite{DBLP:conf/stoc/BergBKMZ18}.  Our contribution is an {\em optimal} parameterized algorithm for {\sc Long Path} (and~{\sc Long Cycle}) on unit disk graphs under the \ETH. Specifically, we prove the following theorem.

\begin{theorem}\label{thm:main}
{\sc Long Path} and {\sc Long Cycle} are solvable in time $2^{\OO(\sqrt{k})}(n+m)$ on unit disk graphs.
\end{theorem}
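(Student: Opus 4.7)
The plan is to reduce \probKPath{} (and \probKCycle{}) on a unit disk graph $G$ to a weighted variant of the problem on an induced subgraph of treewidth $\OO(\sqrt{k})$, and then solve the latter by standard dynamic programming on a tree decomposition. The overall pipeline has three stages: a marking procedure, construction of a tree decomposition via a UDG separator theorem, and a rank-based DP on that decomposition.

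For the marking step, I would impose a grid of side length $\Theta(1)$ on the plane, so that each cell contains a clique and each vertex has neighbors only in its own cell and a constant number of adjacent cells. For every cell $C$, I would mark $\OO(1)$ vertices per ``interface configuration'' between $C$ and the neighborhoods of its adjacent cells, producing a set $M \subseteq V(G)$ of size $\OO(k)$ together with a weight function $w$ on the edges of $G[M]$ that measures how many unmarked vertices an edge ``absorbs.'' The key claim is that $G$ contains a path (resp.\ cycle) on at least $k$ vertices if and only if $G[M]$ contains a path (resp.\ cycle) of total weight at least $k$. This rests on the observation that, within a single cell, any sub-path can be rerouted through a small number of canonical representative sub-paths connecting designated ``ports.''

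Next, since $|M| = \OO(k)$ and $G[M]$ inherits the UDG structure, the planar-style separator theorem for unit disk graphs yields a balanced separator of $G[M]$ of size $\OO(\sqrt{k})$. Applying it recursively gives, in time $\OO(n+m)$, a tree decomposition of $G[M]$ of width $\OO(\sqrt{k})$. A standard rank-based / representative-sets dynamic program for (weighted) \probKPath{} and \probKCycle{} then runs in time $2^{\OO(w)}\cdot |V(G[M])|$ on a width-$w$ decomposition; with $w=\OO(\sqrt{k})$ and $|M|=\OO(k)$, and noting that the reduction from $G$ costs $\OO(n+m)$, the total running time is $2^{\OO(\sqrt{k})}(n+m)$.

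The main obstacle is the first step. One must ensure that, for every pair of entry/exit ports of a cell $C$, the marked vertices preserve a sub-path realizing the maximum number of internal $C$-vertices usable between those ports, and simultaneously that $|M|=\OO(k)$. A naive marking either loses long paths that wind through dense cells, or blows up $|M|$ beyond $\OO(k)$, which would destroy the $\OO(\sqrt{k})$-treewidth target after the separator step. Designing a marking with tight bounds on both sides, and proving equivalence by a careful path-rerouting argument inside each cell, is the crux; once this is in place, the separator construction and the treewidth DP are routine.
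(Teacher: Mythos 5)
Your overall pipeline --- mark $\OO(1)$ vertices per cell of a clique-grid so that some optimal path/cycle only crosses cells via marked vertices, replace the unmarked bulk of each cell by a weighted unit, and then run a $2^{\OO(\tw)}n$ weighted-path/cycle DP --- is the same as the paper's, and the rerouting-within-a-clique intuition you describe is the right one. The difference, and the place where your argument has a genuine gap, is how you obtain a width-$\OO(\sqrt{k})$ tree decomposition of the reduced graph.

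You assert that the marking produces a set $M$ of size $\OO(k)$ and then invoke a UDG separator theorem on $G[M]$ to get a balanced separator of size $\OO(\sqrt{k})$. Neither half of this is justified. The marking puts $\OO(1)$ vertices in \emph{each} nonempty cell, and the number of nonempty cells is not bounded in terms of $k$ --- it can be $\Theta(n)$ --- so $|M|=\OO(k)$ does not follow; you would need an additional argument (e.g.\ ``too many nonempty cells $\Rightarrow$ answer \textsf{Yes}'') to bound it, and no such argument is given. Moreover, even granting $|M|=\OO(k)$, a general UDG on $\OO(k)$ vertices does not have an $\OO(\sqrt{k})$ separator (a clique has none); what you actually have after marking is a UDG of \emph{bounded degree}, and that is the property that must be exploited. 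The paper's resolution is a win/win argument rather than a separator argument: the reduced graph $G^\star$ has $\Delta=\OO(1)$, and by the bounded-degree grid-minor theorem for UDGs, either $\tw(G^\star)>c\Delta^3\sqrt{k}$, in which case $G^\star$ (and hence $G$) contains a $\sqrt{2k}\times\sqrt{2k}$ grid minor and thus a cycle/path on $\geq k$ vertices so one outputs \textsf{Yes} outright, or $\tw(G^\star)=\OO(\sqrt{k})$ and one runs the DP. Replacing your separator step by this treewidth-vs-grid-minor dichotomy closes the gap; without it, the claim that the decomposition has width $\OO(\sqrt{k})$ is unsupported. (A minor cosmetic difference: you place weights on edges of $G[M]$, whereas the paper contracts the unmarked part of each cell into a single weighted vertex and also deletes cross-cell edges incident to that vertex so that it can only be entered/exited through marked vertices of its own cell; either bookkeeping can be made to work once the treewidth issue is fixed.)
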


Two years ago, a celebrated work by de Berg et al.~\cite{DBLP:conf/stoc/BergBKMZ18} presented (non-parameterized) algorithms with running time $2^{\OO(\sqrt{n})}$ for a number of problems on intersection graphs of so called ``fat'', ``similarly-sized'' geometric objects, accompanied by matching lower bounds of $2^{\Omega(\sqrt{n})}$ under the ETH. Only for the {\sc Vertex Cover} problem this work implies an ETH-tight parameterized algorithm. More precisely, {\sc Vertex Cover} admits a $2k$-vertex kernel on general graphs~\cite{nemhauser1975vertex,DBLP:books/sp/CyganFKLMPPS15}, hence the $2^{\OO(\sqrt{n})}$-time algorithm for {\sc Vertex Cover} by de Berg et al.~\cite{DBLP:conf/stoc/BergBKMZ18} is trivially a $2^{\OO(\sqrt{k})}n^{\OO(1)}$-time fixed-parameter tractable algorithm for this problem. None of the other problems in \cite{DBLP:conf/stoc/BergBKMZ18} is known to admit a linear-vertex kernel, and we know of no other work that presents a $2^{\OO(\sqrt{k})}n^{\OO(1)}$-time algorithm for any basic problem on unit disk graphs. Thus, we present the second known ETH-tight fixed-parameter tractable algorithm for a basic problem on unit disk graphs, or, in fact, on any family of geometric intersection graphs of fat objects. In a sense, our work is the first time where tight \ETH-optimality of fixed-parameter tractable algorithms on unit disk graphs is explicitly answered. (The work of de Berg et al.~\cite{DBLP:conf/stoc/BergBKMZ18} primarily concerned non-parameterized algorithms.) We believe that our work will open a door to the realm to an ETH-tight optimality program for fixed-parameter tractable algorithms on intersection graphs of fat geometric objects.

Prior to our work, {\sc Long Path} and {\sc Long Cycle} were known to be solvable in time $2^{\OO(\sqrt{k}\log k)}(n+m)$ on unit disk graphs~\cite{DBLP:journals/dcg/FominLPSZ19}. Thus, we shave the $\log k$ factor in the exponent in the running time, and thereby, in particular, achieve optimality. Our algorithm is substantially simpler, both conceptually and technically, than the previous algorithm as we explain below. The main tool in the previous algorithm (of \cite{DBLP:journals/dcg/FominLPSZ19}) for {\sc Long Path} (and {\sc Long Cycle}) on unit disk graphs was a new (or rather refined) type of a tree decomposition.\footnote{We refer the reader to Section \ref{sec:prelims} for the definition of a tree decomposition.} The width of the tree decomposition constructed in \cite{DBLP:journals/dcg/FominLPSZ19} is $k^{\OO(1)}$, which on its own does not enable to design a subexponential (or even single-exponential) time algorithm. However, each of its bags (of size $k^{\OO(1)}$) is equipped with a partition into $\OO(\sqrt{k})$ sets such that each of them induces a clique. By establishing a property that asserts the existence of a solution (if at least one solution exists) that crosses these cliques ``few'' times, the tree decomposition can be exploited. Specifically, this exploitation requires to design a very tedious dynamic programming algorithm (significantly more technical than algorithms over ``standard'' tree decompositions, that is, tree decompositions of small width) to keep track of the interactions between the cliques in the~partitions. 

We completely avoid the use of the new type of tree decomposition of \cite{DBLP:journals/dcg/FominLPSZ19}. Instead, we use a simple marking procedure to reduce the problem to (a weighted version of) itself on a tree decomposition of width $\OO(\sqrt{k})$. Then, the new problem can be solved by known algorithms as black boxes by employing  either an essentially trivial $\tw^{\OO(\tw)}n$-time algorithm, or a more sophisticated 
$2^{\OO(\tw)}n$-time algorithm (of \cite{DBLP:journals/iandc/BodlaenderCKN15} or \cite{DBLP:journals/jacm/FominLPS16}).  On a high level, we are able to mark few vertices in certain cliques (which become the cliques in the above mentioned partitions of bags in \cite{DBLP:journals/dcg/FominLPSZ19}), so that there exists a solution (if at least one solution exists) that uses only marked vertices as ``portals''---namely, it crosses cliques only via edges whose both endpoints are marked. Then, in each clique, we can just replace all unmarked vertices by a single weighted vertex. This reduces the size of each clique to be constant, and yields a tree decomposition of width $\OO(\sqrt{k})$. We believe that our idea of identification of portals and replacement of all non-portals by few weighted vertices will find further applications in the design of ETH-tight parameterized algorithms on intersection graphs of fat geometric~objects.

Before we turn to briefly survey some additional related works, we would like to stress that shaving off logarithmic factors in the exponent of running times of parameterized algorithms is a major issue in this field. Indeed, when they appear in the exponent, logarithmic factors have a {\em critical} effect on efficiency that can render algorithms impractical even on small instances. Over the past two decades, most fundamental techniques in Parameterized Complexity targeted not only the objective of eliminating the logarithmic factors, but even improving the base $c$ in running times of the form $c^kn^{\OO(1)}$. For example, this includes the aforementioned color coding technique~\cite{AlonYZ} that was developed to shave off the $
\log k$ in a previous $2^{\OO(k\log k)}n^{\OO(1)}$-time algorithm, which further entailed a flurry of research on techniques to improve the base of the exponent (see Section \ref{sec:relatedWorks}), and the cut-and-count technique to design parameterized algorithms in time $2^{\OO(t)}n^{\OO(1)}$ rather than $2^{\OO(t\log t)}n^{\OO(1)}$ (in fact, for connectivity problems such as {\sc Long Path}) on graphs of treewidth $t$ \cite{DBLP:conf/focs/CyganNPPRW11}. Accompanying this active line of research, much efforts were devoted to prove that problems that have long resisted the design of algorithms without logarithmic factors in the exponent are actually unlikely to admit such algorithms~\cite{DBLP:journals/siamcomp/LokshtanovMS18}.

\subsection{Related Works on Long Path and Long Cycle}\label{sec:relatedWorks}

We now briefly survey some known results in Parameterized Complexity on {\sc Long Path} and {\sc Long Cycle}. Clearly, this survey is illustrative rather than comprehensive.  The standard parameterization of {\sc Long Path} and {\sc Long Cycle} is by the solution size $k$, and here we will survey only results that concern this parameterization.

The \probKPath\ (parameterized by the solution size $k$ on general graphs) is arguably one of the five (or even fewer) problems with the richest history in Parameterized Complexity, having parameterized algorithms continuously developed since the early days of the field and until this day. The algorithms developed along the way gave rise to some of the most central techniques in the field, such as color-coding~\cite{AlonYZ} and its incarnation as divide-and-color~\cite{chen2009randomized}, techniques based on the polynomial method   \cite{Koutis08,KoutisW16,Williams09,DBLP:journals/jcss/BjorklundHKK17},  and matroid based techniques \cite{FominLPS16}. The first parameterized algorithm for this problem was an $2^{\OO(k\log k)}n^{\OO(1)}$-time given in 1985 by Monien~\cite{monien1985find}, even before the term ``parameterized algorithm'' was in known use. Originally in 1994, the logarithmic factor was shaved off~\cite{AlonYZ}, resulting in an algorithm with running time $c^{k}n^{\OO(1)}$ for $c=2e$. After that, a long line of works that presented improvements over $c$ has followed~\cite{Koutis08,KoutisW16,Williams09,DBLP:journals/jcss/BjorklundHKK17,FominLPS16,Zehavi14,DBLP:journals/algorithmica/HuffnerWZ08,DBLP:journals/jcss/ShachnaiZ16,chen2009randomized,DBLP:journals/tcs/Tsur19b}, where the algorithm with the current best running time is a randomized algorithm whose time complexity is $1.66^k n^{\cO(1)}$~\cite{DBLP:journals/jcss/BjorklundHKK17}.
Unless the ETH fails, \probKPath\ (as well as {\sc Long Cycle}) does not admit any algorithm with running time $2^{o(k)} n^{\cO(1)}$~\cite{ImpagliazzoPZ01}.

For a long time, the {\sc Long Cycle} problem was considered to be significantly harder than {\sc Long Path} due to the following reason: while the existence of a path of size at least $k$ implies the existence of a path of size exactly $k$, the existence of a cycle of size at least $k$ does not imply the existence of a cycle of size exactly $k$---in fact, the only cycle of size at least $k$ in the input graph might be a Hamiltonian cycle. Thus, for this problem, the first parameterized algorithm appeared (originally) only in 2004~\cite{gabow2008finding}, and the first parameterized algorithm with running time $2^{\OO(k)}n^{\OO(1)}$ appeared (originally) only in 2014~\cite{FominLPS16}. Further improvements on the base of the exponent in the running time were given in \cite{DBLP:journals/ipl/Zehavi16,DBLP:journals/ipl/FominLPSZ18}. Lastly, we remark that due to their importance, over the past two decades there has been extensive research of {\sc Long Path} and {\sc Long Cycle} parameterized by $k$ above some guarantee~\cite{DBLP:conf/icalp/BezakovaCDF17,DBLP:conf/esa/FominGLP0Z19,DBLP:conf/wg/Jansen0N19,ITCS20Girth}, and the (approximate) counting versions of these problems~\cite{FlumG04,DBLP:conf/isaac/ArvindR02,DBLP:conf/ismb/AlonDHHS08,DBLP:journals/talg/AlonG10,DBLP:conf/iwpec/AlonG09,DBLP:conf/stoc/BrandDH18,DBLP:conf/icalp/BjorklundL0Z19}. Both {\sc Long Path} and {\sc Long Cycle} are unlikely to admit a polynomial kernel~\cite{bodlaender2009problems}, and in fact, are even conjectured not to admit Turing kernels~\cite{DBLP:journals/algorithmica/HermelinKSWW15,DBLP:journals/algorithmica/JansenPW19}.

While \probKPath and \probKCycle  remain NP-complete on planar graphs, they admit  $2^{\cO(\sqrt{k})} n^{\cO(1)}$-time algorithms: By combining  the bidimensionality theory  of Demaine et al. \cite{DemaineFHT05jacm} with efficient   algorithms on graphs of bounded treewidth \cite{DornPBF10}, \probKPath\ and \probKCycle, can be solved in time $2^{\cO(\sqrt{k})} n^{\cO(1)}$ on planar graphs. Moreover, the parameterized subexponential ``tractability'' of \probKPath/{\sc Cycle} can be extended to  graphs excluding some fixed graph as a minor~\cite{Demaine:2008mi}. Unfortunately, unit disk graphs are somewhat different than planar graphs and $H$-minor free graphs---in particular, unlike planar graphs and $H$-minor free graphs where the maximum clique size is bounded by $5$ (for planar graphs) or some other fixed constant (for $H$-minor free graphs), unit disk graphs can contain cliques of arbitrarily large size and are therefore ``highly non-planar''. Nevertheless, Fomin et al.~\cite{FominLS12}  were able to obtain subexponential parameterized algorithms of running time $2^{\cO(k^{0.75}\log{k})} n^{\cO(1)}$ on unit disk graphs for \probKPath, \probKCycle, \probFVS and  \probCycPacking. None of these four problems can be solved in time  $2^{o(\sqrt{n})}$(and hence also in time $2^{o(\sqrt{k})}n^{\OO(1)}$) on unit disk graphs unless the \ETH\ fails~\cite{DBLP:conf/stoc/BergBKMZ18}.
Afterwards (originally in 2017), Fomin et al.~\cite{DBLP:journals/dcg/FominLPSZ19} obtained improved, yet technically quite tedious, $2^{\cO(\sqrt{k}\log{k})} n^{\cO(1)}$-time algorithms for  
\probKPath, \probKCycle and \probFVS\ and \probCycPacking. Recall that this work was discussed earlier in the introduction. 
Later, the same set of authors designed  $2^{\cO(\sqrt{k}\log{k})}  n^{\cO(1)}$ time algorithm  for the aforementioned problems on map graphs~\cite{DBLP:conf/icalp/FominLP0Z19}. We also remark that recently, Panolan et al.~\cite{DBLP:conf/soda/Panolan0Z19} proved a contraction decomposition theorem on unit disk graphs and as an application of the theorem, they proved that {\sc Min-Bisection} on unit disk graphs can be solved in time $2^{\cO(k)}n^{\cO(1)}$.

\section{Preliminaries}\label{sec:prelims}

For $\ell\in\mathbb{N}$, let $[\ell]=\{1,\ldots,\ell\}$. 
For a graph $G$, let $V(G)$ and $E(G)$ denote its vertex set and edge set, respectively. When $G$ is clear from context, let $n=|V(G)|$ and $m=|E(G)|$. For a subset $U\subseteq V(G)$, let $G[U]$ denote the subgraph of $G$ induced by $U$. A graph $H$ is a {\em minor} of $G$ if $H$ can be obtained from $G$ by a sequence of edge deletions, edge contractions and vertex deletions. Given $a,b\in\mathbb{N}$, an {\em $a\times b$-grid} is a graph on $a\cdot b$ vertices that can be denoted by $v_{i,j}$ for $(i,j)\in[a]\times[b]$, such that $E(G)=\{\{v_{i,j},v_{i+1,j}\}: i\in[a-1],j\in[b]\}\cup\{\{v_{i,j},v_{i,j+1}\}: i\in[a], j\in[b-1]\}$.

\subparagraph*{Unit disk graphs.}
Let $P=\{p_1=(x_1,y_1),p_2=(x_2,y_2),\ldots,p_n=(x_n,y_n)\}$ be a set of points in the Euclidean plane. Let $D=\{d_1,d_2,\ldots,d_n\}$ where for every $i\in [n]$, $d_i$ is the disk of radius 1 whose centre is $p_i$.  Then, the unit disk graph of $D$ is the graph $G$ such that $V(G)=D$ and $E(G)=\{\{d_i,d_j\}~|~d_i,d_j\in D, i\neq j, \sqrt{(x_i-x_j)^2+(y_i-y_j)^2}\leq 2\}$. Throughout the paper, we suppose that any given UDG $G$ is accompanied by a corresponding set of disks $D$, which is critical to our algorithm. We also remark that, given  a graph $G$ (without $D$), the decision of whether $G$ is a UDG is NP-hard~\cite{breu1998unit} (in fact, even $\exists\mathbb{R}$-hard~\cite{DBLP:journals/dcg/KangM12}). 

\subparagraph*{Clique-Grids.} Intuitively, a clique-grid is a graph whose vertices can be embedded in grid cells (where multiple vertices can be embedded in each cell), so that the each cell induces a clique and ``interacts'' (via edges incident to its vertices) only with cells at ``distance'' at most~2 (see Fig.~\ref{fig:cliquegridgraph}).

\begin{definition}[{\bf Clique-Grid}]\label{def:GridClique}
A graph $G$ is a {\em clique-grid} if there exists a function $f: V(G)\rightarrow [t]\times [t]$  for some $t\in {\mathbb N}$, called a {\em representation}, such that the following conditions are satisfied.
\begin{enumerate}\setlength\itemsep{0em} 
\item\label{condition:GridClique1} For all $(i,j)\in [t]\times [t]$, $G[f^{-1}(i,j)]$ is a clique.
\item\label{condition:GridClique2} For all $\{u,v\}\in E(G)$, $|i-i'|\leq 2$ and $|j-j'|\leq 2$ where $f(u)=(i,j)$ and $f(v)=(i',j')$.
\end{enumerate}
\end{definition}
We call a pair $(i,j)\in [t]\times [t]$ a {\em cell}. It is easy to see that a unit disk graph is a clique-grid, and a representation of it, can be computed in linear time. A formal proof can be found in \cite{DBLP:journals/dcg/FominLPSZ19} (also see \cite{ito2010tractability} for a similar result). Specifically, we will refer to the following proposition.

\begin{figure}
\begin{center}
\begin{tikzpicture}[scale=2]
\draw (0,1)--(6,1);
\draw (0,2)--(6,2);
\draw (0,3)--(6,3);
\draw (0,4)--(6,4);
\draw (0,5)--(6,5);
\draw (0,1)--(0,5);
\draw (1,1)--(1,5);
\draw (2,1)--(2,5);
\draw (3,1)--(3,5);
\draw (4,1)--(4,5);
\draw (5,1)--(5,5);
\draw (6,1)--(6,5);
\draw (2.8,2.8)--(2.7,2.3)--(2.2,2.85)--(2.1,2.1)--(2.8,2.8)--(2.2,2.85);
\draw (2.1,2.1)--(2.7,2.3);
\node[] (a1) at (2.8,2.8) {$\bullet$}; 
\node[] (a1) at (2.7,2.3) {$\bullet$}; 
\node[] (a1) at (2.2,2.85) {$\bullet$}; 
\node[] (a1) at (2.1,2.1) {$\bullet$}; 
\draw[blue] (1.7,2.3)--(2.2,2.85)--(1.2,2.85);
\draw[blue]  (1.7,2.3)--(2.1,2.1)to [out=90,in=-20] (1.2,2.85);
\draw[blue]  (1.2,2.1)--(2.1,2.1)--(0.9,1.8);
\draw (1.2,2.1)--(1.7,2.3)--(1.2,2.85)--(1.2,2.1);
\node[] (a1) at (1.2,2.1) {$\bullet$}; 
\node[] (a1) at (1.7,2.3) {$\bullet$}; 
\node[] (a1) at (1.2,2.85) {$\bullet$}; 
\draw[blue] (1.2,2.85)--(0.9,1.8);
\draw[blue]  (0.1,1.85)--(1.2,2.1)--(0.9,1.8);
\draw[blue]  (1.2,2.1)--(0.8,1.1);
\draw (0.8,1.1) --(0.9,1.8) -- (0.1,1.85) -- (0.8,1.1);
\node[] (a1) at (0.8,1.1) {$\bullet$}; 
\node[] (a1) at (0.9,1.8) {$\bullet$}; 
\node[] (a1) at (0.1,1.85) {$\bullet$}; 
\draw (3.8,3.1)--(3.7,3.7);
\node[] (a1) at (3.8,3.1) {$\bullet$}; 
\node[] (a1) at (3.7,3.7) {$\bullet$}; 
\draw[blue] (2.8,2.8)--(3.8,3.1)--(2.7,2.3);
\draw[blue] (2.8,2.8)--(3.7,3.7)--(2.7,2.3);
\draw (4.7,2.1) --(4.4,2.8)--(4.1,2.85)--(4.7,2.1);
\node[] (a1) at (4.7,2.1) {$\bullet$}; 
\node[] (a1) at (4.4,2.8) {$\bullet$}; 
\node[] (a1) at (4.1,2.85) {$\bullet$}; 
\draw[blue] (3.8,3.1)--(4.1,2.85) -- (3.7,3.7);
\draw[blue] (3.8,3.1)--(4.4,2.8)-- (3.7,3.7);
\draw (5.7,4.9) -- (5.1,4.1)--(5.9,4.7)--(5.2,4.5)--(5.1,4.1);
\draw (5.2,4.5) -- (5.7,4.9)--(5.9,4.7);
\node[red] (a1) at (5.9,4.7) {$\bullet$};
\node[red] (a1) at (5.2,4.5) {$\bullet$};
\node[red] (a1) at (5.7,4.9) {$\bullet$}; 
\node[] (a1) at (5.1,4.1) {$\bullet$}; 
\draw[blue] (5.1,4.1)--(4.4,2.8);
\draw[blue] (5.1,4.1)--(3.7,3.7);
\draw[blue] (4.7,2.1)-- (5.1,1.5);
\draw[blue] (4.7,2.1)-- (5.3,1.7);
\draw[blue] (4.7,2.1)-- (5.6,1.7);
\draw[blue] (4.7,2.1)-- (5.8,1.5);
\draw[blue] (4.7,2.1)-- (5.6,1.2);
\draw[red] (4.7,2.1)-- (5.2,1.1);
\draw (5.1,1.5) --(5.3,1.7)--(5.6,1.7)--(5.8,1.5)--(5.6,1.2)--(5.2,1.1)-- (5.1,1.5)--(5.6,1.7)--(5.6,1.2)-- (5.1,1.5)--(5.8,1.5)--(5.2,1.1)--(5.3,1.7)--(5.8,1.5);
\draw (5.3,1.7)--(5.6,1.2);
\draw (5.2,1.1)--(5.6,1.7);
\draw (5.4,1.05) -- (5.1,1.5) ;
\draw (5.4,1.05) -- (5.3,1.7) ;
\draw (5.4,1.05) -- (5.6,1.7)  ;
\draw (5.4,1.05) --  (5.8,1.5) ;
\draw (5.4,1.05) --  (5.6,1.2) ;
\draw (5.4,1.05) --  (5.2,1.1) ;
\draw[blue](5.9,2.8)--(5.6,1.7);
\node[] (a1) at (5.1,1.5) {$\bullet$}; 
\node[] (a1) at (5.3,1.7) {$\bullet$}; 
\node[] (a1) at (5.6,1.7) {$\bullet$}; 
\node[] (a1) at (5.8,1.5) {$\bullet$}; 
\node[] (a1) at (5.6,1.2) {$\bullet$}; 
\node[red] (a1) at (5.2,1.1) {$\bullet$}; 
\node[red] (a1) at (5.4,1.05) {$\bullet$}; 
\node[] (a1) at (5.9,2.8) {$\bullet$}; 
\end{tikzpicture}
\end{center}
\caption{A clique-grid graph $G$. Marked vertices are colored black. Good and bad edges are colored blue and red, respectively (see Definition~\ref{def:goodEdge}).
For the sake of illustration only, we use the threshold 5 instead of 121---that is, in phase II of marking let $\ma_2(v,(i',j'))$ denote a set of $5$ vertices in $f^{-1}(i',j')$ that are adjacent to $v$ in $G$, where if no $5$ vertices with this property exist, then let $\ma_2(v,(i',j'))$ denote the set of all vertices with this property.}
\label{fig:cliquegridgraph}
\end{figure}
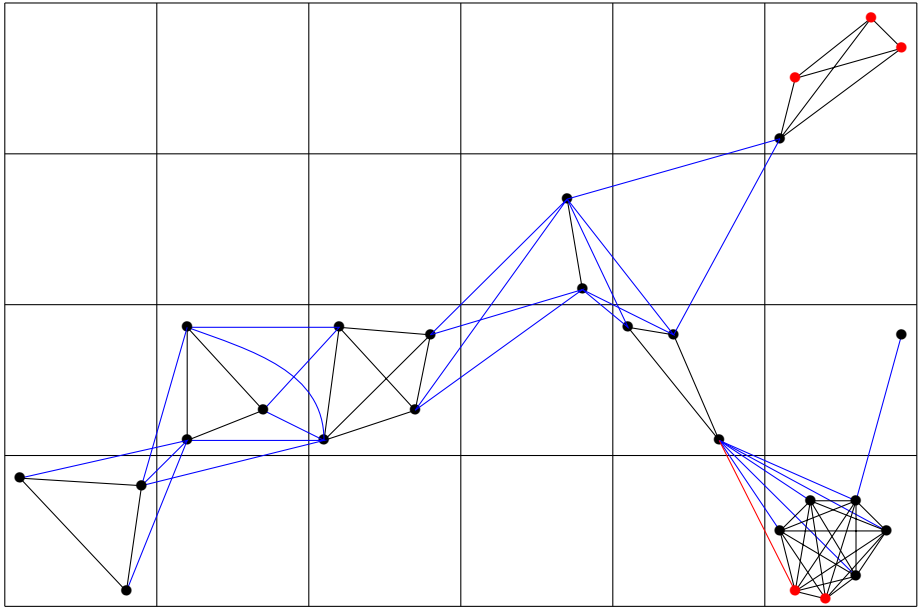

\begin{proposition}[\cite{ito2010tractability,DBLP:journals/dcg/FominLPSZ19}]\label{prop:cliqueGrid}
Let $G$ be the unit disk graph of a set of unit disks $D$ in the Euclidean plane. Then, $G$ is a clique-grid, and a representation of $G$ can be computed in linear time.
\end{proposition}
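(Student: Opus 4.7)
The plan is to lay down an axis-aligned square grid of side length $s=\sqrt{2}$ on the plane and assign each disk to the cell containing its centre. Concretely, I would first translate $P$ so that every coordinate is strictly positive, and then set
\[
f(d_i) \;=\; \bigl(\lceil x_i/\sqrt{2}\rceil,\ \lceil y_i/\sqrt{2}\rceil\bigr),
\]
taking $t$ to be the maximum value appearing in either coordinate over all $i\in[n]$. The bounding box of $P$, and hence $t$, can be computed in $\OO(n)$ time, and each $f(d_i)$ is then computed in $\OO(1)$; storing only the nonempty cells (indexed by the pair $f(d_i)$ alongside each vertex) yields the representation in linear time.

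To verify condition~\ref{condition:GridClique1}, I would observe that if $f(d_i)=f(d_j)$ then $|x_i-x_j|\le \sqrt{2}$ and $|y_i-y_j|\le \sqrt{2}$, so the Euclidean distance between the centres is at most $\sqrt{(\sqrt{2})^2+(\sqrt{2})^2}=2$, which means $d_i$ and $d_j$ intersect and $\{d_i,d_j\}\in E(G)$; hence every $f^{-1}(a,b)$ induces a clique. For condition~\ref{condition:GridClique2}, I would argue the contrapositive: if $f(u)=(i,j)$ and $f(v)=(i',j')$ with $|i-i'|\ge 3$, then the two cells are horizontally separated by a strip of width at least $(|i-i'|-1)\sqrt{2}\ge 2\sqrt{2}>2$, so the distance between the centres of $u$ and $v$ exceeds $2$ and $\{u,v\}\notin E(G)$; the case $|j-j'|\ge 3$ is symmetric.

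The only real subtlety, and hence the place I would think about most carefully, is the choice of $s$. The cell-diagonal bound forces $s\le\sqrt{2}$ (otherwise a single cell could contain two centres more than $2$ apart, breaking the clique property), while the ``two-empty-cell gap'' bound forces $s>1$ (otherwise cells at grid-distance $3$ could still harbour adjacent disks, breaking the interaction-radius constraint of Definition~\ref{def:GridClique}). The value $s=\sqrt{2}$ sits precisely at the tight boundary of both constraints, and this dual role of $\sqrt{2}$ is the only mildly clever element of the argument; every other step is a direct distance calculation or a routine linear-time bookkeeping task.
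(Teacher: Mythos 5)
Your construction is the standard one used in the cited works: lay a square grid of side $\sqrt{2}$, map each disk to the cell containing its centre, and verify both conditions of Definition~\ref{def:GridClique} by direct distance calculations (cell diagonal $=2$ for the clique condition, and a separating strip of width $>2$ when cells differ by $\ge 3$ in a coordinate). The argument is correct and, aside from a harmless imprecision in the boundary discussion (with half-open cells $s=1$ also works, so the constraint is $s\ge 1$ rather than $s>1$; taking $s=\sqrt{2}$ is safe regardless), it matches the proof in the referenced papers.
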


\subparagraph*{Treewidth.}  The treewidth of a graph is a standard measure of its ``closeness''  to a tree, formally defined as follows. 

\begin{definition}[{\bf Treewidth}]\label{def:treeDecomp}
A {\em tree decomposition} of a graph $G$ is a pair $(T,\beta)$, where $T$ is a tree and $\beta$ is a function from $V(T)$ to $2^{V(G)}$, that satisfies the following conditions.
\begin{itemize}\setlength\itemsep{0em} 
\item For every edge $\{u,v\}\in E(G)$, there exists $x\in V(T)$ such that  $\{u,v\}\subseteq \beta(x)$.
\item For every vertex $v\in V(G)$, $T[\{x\in V(T)\}]$ is a tree on at least one vertex.  
\end{itemize}
The {\em width} of $(T,\beta)$ is $\max_{x\in V(T)} |\beta(x)|-1$. The {\em treewidth} of $G$, denoted by $\tw(G)$, is the minimum width over all tree decompositions of $G$.
\end{definition}

The treewidth of a graph can be approximated within a constant factor efficiently as follows.

\begin{proposition}[\cite{BodlaenderDDFLP16}]\label{prop:treewidth}
Given a graph $G$ and a positive integer $k$, in time $2^{\OO(k)} \cdot n$, we can either decide that $\tw(G)>k$ or 
output a tree decomposition of $G$ of width $5k$.
\end{proposition}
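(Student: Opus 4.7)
The plan is to build the decomposition recursively in a top-down fashion, with balanced separators as the workhorse. Recall the standard fact that if $\tw(G) \le k$, then for every vertex set $W \subseteq V(G)$ there is a separator $S$ of at most $k+1$ vertices such that every connected component of $G-S$ contains at most $|W|/2$ vertices of $W$; indeed, one may take $S$ to be an appropriately chosen bag of a width-$k$ decomposition. I will design a recursive routine that, given an induced subgraph $G'$ and a boundary $W \subseteq V(G')$ with $|W| \le 4k+4$, either reports $\tw(G) > k$ or returns a tree decomposition of $G'$ of width at most $5k$ whose root bag contains $W$.

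Each recursive call invokes a balanced-separator subroutine on $(G',W,k)$. If it finds a set $S$ of size at most $k+1$ that is $W$-balanced, the root bag is set to $W \cup S$ (of size at most $5k+5$) and the procedure is invoked on each connected component $C$ of $G' - (W \cup S)$ with new boundary $N_{G'}(C) \cup (W \cap (C \cup N_{G'}(C)))$. Balancedness guarantees $|W_{\text{new}}| \le \tfrac{1}{2}|W| + |S| \le 3k+3$, preserving the invariant; if the subroutine ever fails, we report $\tw(G) > k$. Attaching the child decompositions under the root bag produces the decomposition of $G'$, and launching the recursion with $G'=G$ and $W=\emptyset$ gives the top-level output.

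The main obstacle is implementing the balanced-separator subroutine within the budget $2^{O(k)} |V(G')|$, since naive enumeration over $(k+1)$-subsets would cost $n^{O(k)}$. The key idea is to combine the theory of important separators with a Menger-style duality: one runs a bounded-depth branching search that, at every node, either exhibits a small progress set whose removal strictly reduces the heavy side of $W$, or certifies via a max-flow computation that no $(k+1)$-sized set can balance $W$ (and hence $\tw(G)>k$). Each flow/cut subroutine runs in $O(k(|V(G')|+|E(G')|))$ time after standard preprocessing, and the branching tree has $2^{O(k)}$ leaves, fitting the budget. Summing over the global recursion, and amortizing by the fact that each vertex of $G$ occupies only $O(k)$ bags of the final decomposition, yields the claimed total running time $2^{O(k)} \cdot n$.
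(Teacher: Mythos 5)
This statement is only \emph{cited} in the paper, from Bodlaender, Drange, Dregi, Fomin, Lokshtanov and Pilipczuk (SICOMP 2016); the paper itself contains no proof of it, so your proposal has to be measured against that source rather than against any argument in this manuscript. Your recursion is the classical Robertson--Seymour scheme: carry a boundary $W$ of size $O(k)$, find a $W$-balanced separator $S$ of size at most $k+1$, make $W\cup S$ the root bag, and recurse on the components with updated boundaries. This does produce a valid tree decomposition of width $O(k)$ whenever $\tw(G)\le k$ (with $|W|\le 4k+4$ and $|S|\le k+1$ you actually get width $5k+4$, not $5k$, which is in fact what the cited theorem really proves). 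The claim that the balanced-separator subroutine can be made to run in $2^{O(k)}(n+m)$ time by branching over partitions of $W$ and solving a Menger-type min-cut for each is also plausible.

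The genuine gap is the overall running-time analysis. A $W$-balanced separator shrinks the \emph{boundary} geometrically but gives no control whatsoever on the \emph{number of vertices} in the recursive subproblems. Take a long path and let $W$ be its two endpoints: a $W$-balanced separator may split off a single vertex, leaving a subproblem on $n-O(1)$ vertices, so the recursion has depth $\Omega(n)$ and $\sum_x |V(G_x)|=\Omega(n^2)$; your algorithm then spends $2^{O(k)}n^2$ time, not $2^{O(k)}n$. The amortization you invoke --- that each vertex lies in only $O(k)$ bags of the output --- simply does not hold for this recursion: a vertex can remain in the boundary across $\Omega(n)$ consecutive levels. Even if you strengthened the separator to be balanced with respect to both $W$ and $|V(G')|$, you would only get depth $O(\log n)$ and hence $2^{O(k)}n\log n$ (Reed's bound), still short of the claim. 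The cited work reaches true linearity by a fundamentally different outer loop, namely Bodlaender's self-reduction: contract a large maximal matching (or, when the matching is small, remove a constant fraction of simplicial-like low-degree vertices) to obtain a graph on at most $\alpha n$ vertices for a fixed $\alpha<1$, recurse to get an $O(k)$-width decomposition, lift it back, and then run a single $2^{O(k)}n$-time \emph{improvement} pass over that coarse decomposition to bring the width down to $5k+4$. The recurrence $T(n)=T(\alpha n)+2^{O(k)}n$ then gives the stated bound. Your separator recursion resembles the inner improvement routine, but it cannot deliver linear time on its own without this compression step wrapped around it.
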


We will need the following proposition to argue that a unit disk graph {\em of bounded degree} contains a grid minor of dimension {\em linear} in its treewidth.

\begin{proposition}[\cite{FominLS12}]\label{prop:gridUnitDisk}
Let $G$ be a unit disk graph with maximum degree $\Delta$ and treewidth $\tw$. Then, $G$ contains a $\displaystyle{\frac{\tw}{100\Delta^3}\times\frac{\tw}{100\Delta^3}}$ grid as a minor.
\end{proposition}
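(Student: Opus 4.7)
The plan is to contract the cliques of the clique-grid representation to obtain a bounded-degree ``near-planar'' auxiliary graph, and then invoke the classical planar excluded grid theorem of Robertson--Seymour--Thomas. Let $f:V(G)\to [t]\times [t]$ be a clique-grid representation of $G$ supplied by Proposition~\ref{prop:cliqueGrid}. Because each cell $f^{-1}(i,j)$ is a clique and $G$ has maximum degree $\Delta$, every cell contains at most $\Delta+1$ vertices.

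First, I would contract each non-empty cell of $G$ into a single vertex, obtaining a graph $H$. By Condition~\ref{condition:GridClique2} of Definition~\ref{def:GridClique}, two cells can give rise to an edge in $H$ only if they lie within $L_\infty$-distance $2$, so $H$ has maximum degree at most $5^2-1 = 24$. Moreover, given any tree decomposition of $H$ of width $w$, replacing each vertex of $H$ in every bag by the at most $\Delta+1$ vertices of the corresponding cell yields a valid tree decomposition of $G$ of width at most $(\Delta+1)(w+1)-1$ (cell cliques and the locality from Condition~\ref{condition:GridClique2} ensure that both edge coverage and subtree connectivity survive). Consequently $\tw(H) \ge \tw(G)/(\Delta+1) - 1$.

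Next, I would planarize $H$. Placing the vertex representing cell $(i,j)$ at the integer point $(i,j)\in \mathbb{R}^2$ and drawing each edge as a straight segment yields a drawing with only $O(1)$ crossings per edge, since an edge's endpoints lie in a bounded region of the grid. Resolving each crossing by re-routing one of the two conflicting edges through $O(1)$ fresh degree-$2$ subdivision vertices produces a planar graph $H'$ that is a subdivision of $H$. Since contracting the subdivision edges recovers $H$, the graph $H$ is a minor of $H'$, and in particular $\tw(H') \ge \tw(H)$.

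Finally, I would apply the planar excluded grid theorem to $H'$: every planar graph of treewidth $T$ contains an $\Omega(T)\times\Omega(T)$ grid as a minor. This yields a grid minor of side $\Omega(\tw(H)) = \Omega(\tw(G)/\Delta)$ in $H'$, which already exceeds $\tw(G)/(100\Delta^3)$ when $\Delta \ge 1$. The main obstacle is then to transfer this grid minor from $H'$ back to $H$, and thence to $G$ (since $H$ is a minor of $G$ by construction): grids have maximum degree $4$ and so a minor in $H'$ need not pull back directly through the subdivisions, but the standard workaround routes through walls, which are subcubic so that wall minors coincide with wall topological minors, and a wall topological minor of $H'$ smoothens through the degree-$2$ subdivision vertices to yield a wall topological minor (and hence a grid minor of comparable size) of $H$, finishing the proof.
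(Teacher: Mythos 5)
The fatal gap is the planarization step. You claim that ``resolving each crossing by re-routing one of the two conflicting edges through $O(1)$ fresh degree-$2$ subdivision vertices produces a planar graph $H'$ that is a subdivision of $H$.'' This is impossible whenever $H$ is non-planar, because planarity is invariant under subdivision: a subdivision of a graph is planar if and only if the original graph is. And the cell-contraction graph $H$ is non-planar in general. Edges of $H$ connect cells at $L_\infty$-distance at most $2$ (Condition~\ref{condition:GridClique2} of Definition~\ref{def:GridClique}), so five cells such as $(0,0),(1,0),(2,0),(0,1),(1,1)$ are pairwise eligible to be adjacent, and a unit disk graph realising all of those adjacencies gives $H\supseteq K_5$. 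Once that happens there is no planar subdivision $H'$ of $H$, and the remainder of the argument---the inequality $\tw(H')\ge\tw(H)$, the invocation of the planar excluded grid theorem on $H'$, and the wall-based pullback to $H$---has nothing to stand on.

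The underlying observation that $H$ (or $G$ itself, drawn with vertices at disk centres) admits a straight-line drawing with $\mathrm{poly}(\Delta)$ crossings per edge is correct and is the right handle on the problem, but cashing it in requires more than subdivision. The standard planarization places a degree-$4$ dummy vertex at each crossing, yielding a planar graph $\hat H$; however, $H$ is \emph{not} a minor of $\hat H$ (the two paths belonging to a pair of crossing $H$-edges share the dummy vertex, so they cannot be contracted back to disjoint edges), hence $\tw(\hat H)\ge\tw(H)$ needs its own argument, and a wall found in $\hat H$ may run through dummy vertices and cannot be smoothened away the way degree-$2$ subdivision vertices can. What one really needs is an excluded-grid statement phrased directly for graphs drawable with bounded local crossing number, which is essentially what the cited proof in~\cite{FominLS12} supplies; the $\Delta^3$ in the bound tracks the accumulated losses from the crossing bound and the cell-size bound, not a detour through an alleged planar subdivision.
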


\section{Marking Scheme}\label{sec:marking}

In this section, we present a marking scheme whose purpose is to mark a constant number of vertices in each cell of a clique-grid $G$ so that, if $G$ has a path (resp.~cycle) on at least $k$ vertices, then it also has a path (resp.~cycle) on at least $k$ vertices that ``crosses'' cells only at marked vertices. Then, we further argue that unmarked vertices in a cell can be thought of, in a sense, as a ``unit'' representable by one weighted vertex. We note that we did not make any attempt to optimize the number of vertices marked, but only make the proof simple.

\subparagraph*{Marking Scheme.}
Let $G$ be a clique-grid graph with representation $f: V(G)\rightarrow [t]\times [t]$. Then, the marking scheme consists of two phases defined as follows.

\smallskip
\noindent{\bf Phase I.}  For each pair of distinct cells $(i,j),(i',j')\in [t]\times [t]$ with $|i-i'|\leq 2$ and $|j-j'|\leq 2$, let $M$ be a maximal matching where each edge has one endpoint in $f^{-1}(i,j)$ and the other endpoint in $f^{-1}(i',j')$; if $|M|\leq 241$, then denote $\ma_1(\{(i,j),(i',j')\})=M$, and otherwise choose a subset $M'$ of $M$ of size $241$ and let $\ma_1(\{(i,j),(i',j')\})=M'$.

For each cell $(i,j)\in [t]\times[t]$, let $\ma_1(i,j)$ denote the set of all vertices in $f^{-1}(i,j)$ that are endpoints of edges in $\bigcup_{(i',j')}\ma_1(\{(i,j),(i',j')\})$ where $(i',j')$ ranges over all cells such that $|i-i'|\leq 2$ and $|j-j'|\leq 2$; the vertices belonging to this set are called {\em marked vertices}.

\smallskip
\noindent{\bf Phase II.} For each ordered pair of distinct cells $(i,j),(i',j')\in[t]\times[t]$ with $|i-i'|\leq 2$ and $|j-j'|\leq 2$ and vertex $v\in\ma_1(i,j)$, let $\ma_2(v,(i',j'))$ denote a set of $121$ 
vertices in $f^{-1}(i',j')$ that are adjacent to $v$ in $G$, where if no $121$ vertices with this property exist, then let $\ma_2(v,(i',j'))$ denote the set of all vertices with this property; the vertices that belong to this set are also called {\em marked vertices}.

\smallskip
\noindent{\bf Altogether.} For each cell $(i,j)\in[t]\times[t]$, let $\ma^\star(i,j)$ denote the set of all marked vertices in $f^{-1}(i,j)$.

\bigskip
\noindent Clearly, given $G$ and $f$, $\ma^\star(i,j)$ is not uniquely defined. Whenever we write $\ma^\star(i,j)$, we refer to an arbitrary set that can be the result of the scheme above. We remark that ``guessing'' the endpoints of the sought path and marking them can simplify some arguments ahead, but this will lead to worse time complexity. We have the following simple observation regarding the size of $\ma^\star(i,j)$ and the computation time.

\begin{observation}\label{obs:sizeMa}
Let $G$ be a clique-grid with representation $f: V(G)\rightarrow[t]\times[t]$. For each cell $(i,j)\in[t]\times[t]$,$|\ma^\star(i,j)|\leq 10^{10}$. Moreover, the computation of all the sets $\ma^\star(i,j)$ together can be done in linear time.
\end{observation}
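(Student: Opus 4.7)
My plan is to handle the size bound and the running time separately, leaning on two simple observations: a cell $(i,j)$ has at most $5\cdot 5 - 1 = 24$ distinct cells $(i',j')$ with $|i-i'|\le 2$ and $|j-j'|\le 2$, and each vertex belongs to $\ma_1(\cdot,\cdot)$ for at most one cell (namely the one it lies in).

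For the size bound, I would first argue that $\ma_1(i,j)$ receives at most $241$ endpoints from the Phase~I matching with each of the at most $24$ neighbouring cells, so $|\ma_1(i,j)|\le 24\cdot 241$. Phase~II can add to $\ma^\star(i,j)$ only vertices appearing in some set $\ma_2(v,(i,j))$ with $v\in\ma_1(i',j')$ for one of the $24$ neighbouring cells $(i',j')$, and each such $v$ contributes at most $121$ vertices. Summing gives $|\ma^\star(i,j)|\le 24\cdot 241+24\cdot(24\cdot 241)\cdot 121$, which is comfortably below $10^{10}$.

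For the running time, the approach is to invoke Proposition~\ref{prop:cliqueGrid} to obtain the representation $f$ in linear time, then to bucket vertices by their cell and edges by their unordered cell-pair in an initial $\OO(n+m)$ pass. Phase~I on a pair $\{(i,j),(i',j')\}$ can then be executed by a greedy sweep of its edge bucket: add an edge to $M$ whenever both endpoints are still unmatched, and abort once $|M|$ reaches $241$. The result is either a maximal matching (when the bucket is exhausted) or a size-$241$ subset of one, and the work done is proportional to the number of edges scanned, so Phase~I costs $\OO(m)$ in total.

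For Phase~II, because each vertex lies in at most one $\ma_1(\cdot,\cdot)$, I would iterate over all marked vertices $v$ and, for each, scan the adjacency list of $v$ once, consulting $f(u)$ for every neighbour $u$ and appending $u$ to $\ma_2(v,f(u))$ unless that set already contains $121$ elements. This computes $\ma_2(v,(i',j'))$ for all $24$ relevant cells $(i',j')$ simultaneously in $\OO(\deg(v))$ time, and summing over the marked vertices gives $\OO(\sum_v \deg(v))=\OO(n+m)$. No step presents a genuine obstacle; the only mild care needed is maintaining the $\OO(1)$ Boolean flags per vertex (one for each incident cell pair) used by the greedy matching, so that every edge and adjacency entry is touched only a constant number of times.
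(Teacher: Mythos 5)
Your size bound reproduces the paper's counting argument exactly: $|\ma_1(i,j)|\le 24\cdot 241$ from Phase~I and at most $24\cdot(24\cdot 241)\cdot 121$ additional vertices from Phase~II, totalling well under $10^{10}$. The paper dismisses the linear-time claim as ``immediate,'' and your greedy-sweep implementation (bucket edges by cell pair, abort matching construction at $241$, then one adjacency-list pass per marked vertex) is a correct and standard way to realize it, so the two proofs are essentially the same.
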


\begin{proof}
Consider a cell $(i,j)\in[t]\times[t]$. In the first phase, at most $24\cdot 241$ vertices in $f^{-1}(i,j)$ are marked. In the second phase, for each of the $24$ cells $(i',j')$ such that $|i-i'|\leq 2$ and $|j-j'|\leq 2$, and each of the at most $24\cdot 241$ marked vertices in $f^{-1}(i',j')$, at most $121$ new vertices in $f^{-1}(i,j)$ are marked. Therefore, in total at most $24\cdot 241+24\cdot (24\cdot 241)\cdot 121\leq 10^{10}$ vertices in $f^{-1}(i,j)$ are marked.
The claim regarding the computation time is immediate.
\end{proof}

As part of the proof that our marking scheme has the property informally stated earlier, we will use the following proposition.

\begin{proposition}[\cite{DBLP:journals/dcg/FominLPSZ19}]\label{prop:fewCrossings}
Let $G$ be a clique-grid with representation $f$ that has a path (resp.~cycle) on at least $k$ vertices. Then, $G$ also has a path (resp.~cycle) $P$ on at least $k$ vertices with the following property: for every two distinct cells $(i,j)$ and $(i',j')$, there exist at most $5$ edges $\{u,v\}\in E(P)$ such that $f(u)=(i,j)$ and $f(v)=(i',j')$.
\end{proposition}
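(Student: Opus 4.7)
My plan is to start with an arbitrary path (resp.~cycle) $Q$ in $G$ on at least $k$ vertices, and iteratively \emph{rewire} $Q$ while preserving its vertex set until every pair of cells has at most $5$ crossings between them. The key rewiring lemma is: if for some pair of distinct cells $C_1=f^{-1}(i,j)$, $C_2=f^{-1}(i',j')$ the walk $Q$ contains $\geq 6$ crossings (edges with one endpoint in each), then one can produce a new path/cycle $Q'$ on the same vertex set with at least $2$ fewer crossings between $(C_1,C_2)$ and the same number of crossings between every other pair of cells.

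To prove the lemma, list the crossings $e_1,\ldots,e_t$ in the order they appear along $Q$, $t\geq 6$. Write $e_r=\{a_r,b_r\}$ with $a_r\in C_1$, $b_r\in C_2$, and orient each $e_r$ as $C_1\to C_2$ or $C_2\to C_1$ according to which endpoint comes first on $Q$. By pigeonhole, at least $\lceil t/2\rceil\geq 3$ crossings share the same orientation, so we can pick two of them $e_p, e_q$ with $p<q$, both (WLOG) going $C_1\to C_2$. Decompose $Q=R_1\cdot e_p\cdot R_2\cdot e_q\cdot R_3$, where $R_1$ ends at $a_p$, $R_2$ is the sub-walk from $b_p$ to $a_q$ (so $R_2$ has at least two vertices, since $b_p\in C_2$ and $a_q\in C_1$ are distinct), and $R_3$ starts at $b_q$. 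Define
\[
Q' \;\coloneqq\; R_1\cdot\{a_p,a_q\}\cdot\overline{R_2}\cdot\{b_p,b_q\}\cdot R_3,
\]
where $\overline{R_2}$ is $R_2$ traversed in reverse. The new edges $\{a_p,a_q\}$ and $\{b_p,b_q\}$ lie inside the cliques $C_1$ and $C_2$ respectively, so they exist in $G$ and are not $(C_1,C_2)$-crossings; every other edge of $Q'$ coincides, as an unordered pair, with an edge of $Q$. Hence $Q'$ is a valid path (resp.~cycle) on the same vertex set as $Q$, the two crossings $e_p,e_q$ have been replaced by non-crossings, and the number of crossings between every other pair of cells is preserved.

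Applying the lemma iteratively, each rewiring step strictly decreases the total number of crossings summed over all cell pairs. Since this total is a nonnegative integer bounded by $|E(Q)|<n$, the process terminates. At termination no cell pair carries $\geq 6$ crossings, which is the desired conclusion. Because the vertex set of $Q$ is preserved throughout, the final path/cycle still has $\geq k$ vertices.

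The main obstacle I expect is verifying the validity of the rewiring in corner cases: when $R_1$ or $R_3$ is empty (i.e.\ a path endpoint equals $a_p$ or $b_q$), and the cycle case where the closing edge must survive. Both are straightforward because the rewiring modifies only the interior segment of $Q$ strictly between the two selected crossings $e_p$ and $e_q$; the closure edge in the cycle case and the two path endpoints in the path case lie outside this segment and are untouched.
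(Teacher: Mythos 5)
The paper states this as a cited proposition from \cite{DBLP:journals/dcg/FominLPSZ19} and does not reprove it, so there is no in-paper argument to compare against. Your rewiring argument is correct: two crossings between cells $C_1,C_2$ sharing an orientation admit a local swap in which the two inter-cell edges $e_p,e_q$ are replaced by the intra-cell edges $\{a_p,a_q\}$ and $\{b_p,b_q\}$ (which exist because each cell induces a clique) while the middle segment $R_2$ is reversed; this keeps the vertex set, leaves every other cell-pair's crossing count fixed (the edge set changes only by removing $e_p,e_q$ and adding the two intra-cell edges), and strictly decreases your potential function, so iteration terminates. Since two same-oriented crossings are vertex-disjoint (sharing a vertex would contradict either the orientation or the simplicity of the path/cycle), the swap always produces a simple path or a single cycle, and the corner cases you flag are indeed benign. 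One small remark: you only ever need \emph{two} same-oriented crossings, not three, and by pigeonhole two same-oriented crossings already exist whenever a pair carries at least three; your method therefore drives every pair down to at most two crossings, strictly stronger than the bound of $5$ in the statement, which of course still implies the proposition.
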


We now formally state and prove the property achieved by our marking scheme. For this purpose, we have the following definition (see Fig.~\ref{fig:cliquegridgraph}) and lemma.

\begin{definition}\label{def:goodEdge}
Let $G$ be a clique-grid with representation $f$. 
An edge $\{u,v\}\in E(G)$ where $f(u)\neq f(v)$ is a {\em good} edge if $u\in\ma^\star(i,j)$ and $v\in\ma^\star(i',j')$ where $f(u)=(i,j)$ and $f(v)=(i',j')$; otherwise, it is {\em bad}.
\end{definition}

Intuitively, the following lemma asserts the existence of a solution (if any solution exists) that crosses different cells only via good edges, that is, via marked vertices.

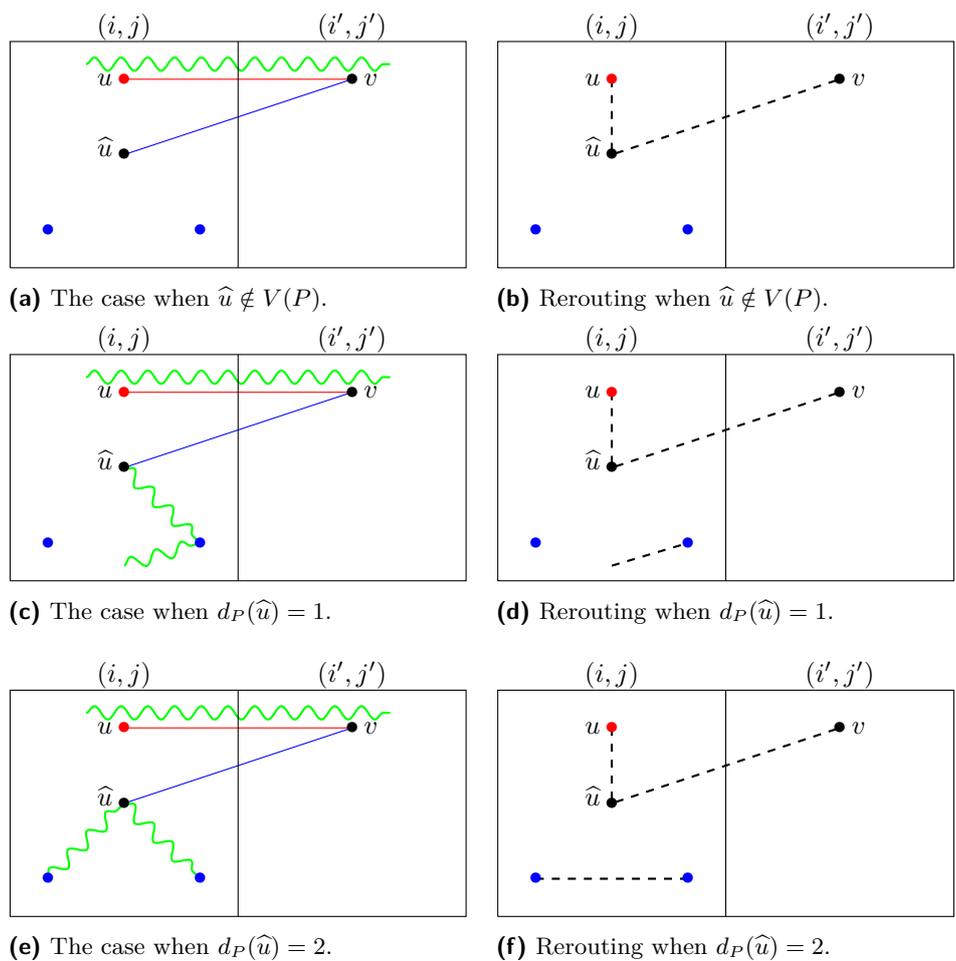
\begin{figure}
\begin{center}
\begin{subfigure}[b]{0.45\textwidth}
\begin{tikzpicture}[scale=1]
 \path [draw=green,snake it,thick] (1,2.7)--(5,2.7);
\draw[red] (1.5,2.5)--(4.5,2.5);
\draw[blue] (1.5,1.5)--(4.5,2.5);
\node[] (a1) at (1.5,3.2) {$(i,j)$}; 
\node[] (a1) at (4.5,3.2) {$(i',j')$};
\node[] (a1) at (1.25,2.5) {$u$};
\node[red] (a1) at (1.5,2.5) {$\bullet$}; 
\node[] (a1) at (1.5,1.5) {$\bullet$}; 
\node[] (a1) at (1.25,1.6) {$\widehat{u}$}; 
\node[blue] (a1) at (0.5,0.5) {$\bullet$}; 
\node[blue] (a1) at (2.5,0.5) {$\bullet$}; 
\node[] (a1) at (4.75,2.5) {$v$};
\node[] (a1) at (4.5,2.5) {$\bullet$};
\draw (0,0)--(0,3)--(6,3)--(6,0)--(0,0);
\draw (3,0)--(3,3);
\end{tikzpicture}
\caption{The case when $\widehat{u}\notin V(P)$.}
\end{subfigure}
\begin{subfigure}[b]{0.45\textwidth}
\begin{tikzpicture}[scale=1]
\draw[dashed, thick] (1.5,2.5)--(1.5,1.5)--(4.5,2.5);
\node[] (a1) at (1.5,3.2) {$(i,j)$}; 
\node[] (a1) at (4.5,3.2) {$(i',j')$};
\node[] (a1) at (1.25,2.5) {$u$};
\node[red] (a1) at (1.5,2.5) {$\bullet$}; 
\node[] (a1) at (1.5,1.5) {$\bullet$}; 
\node[blue] (a1) at (0.5,0.5) {$\bullet$}; 
\node[blue] (a1) at (2.5,0.5) {$\bullet$}; 
\node[] (a1) at (1.25,1.6) {$\widehat{u}$}; 
\node[] (a1) at (4.75,2.5) {$v$};
\node[] (a1) at (4.5,2.5) {$\bullet$};
\draw (0,0)--(0,3)--(6,3)--(6,0)--(0,0);
\draw (3,0)--(3,3);
\end{tikzpicture}
\caption{Rerouting when $\widehat{u}\notin V(P)$.}
\end{subfigure}
\vspace{0.3cm}
\begin{subfigure}[b]{0.45\textwidth}
\begin{tikzpicture}[scale=1]
 \path [draw=green,snake it,thick] (1,2.7)--(5,2.7); 
 \path [draw=green,snake it,thick] (1.5,1.5)--(2.5,0.5);
 \path [draw=green,snake it,thick](1.5,0.2)--(2.5,0.5);
\draw[red] (1.5,2.5)--(4.5,2.5);
\draw[blue] (1.5,1.5)--(4.5,2.5);
\node[] (a1) at (1.5,3.2) {$(i,j)$}; 
\node[] (a1) at (4.5,3.2) {$(i',j')$};
\node[] (a1) at (1.25,2.5) {$u$};
\node[red] (a1) at (1.5,2.5) {$\bullet$}; 
\node[] (a1) at (1.5,1.5) {$\bullet$}; 
\node[blue] (a1) at (0.5,0.5) {$\bullet$}; 
\node[blue] (a1) at (2.5,0.5) {$\bullet$}; 
\node[] (a1) at (1.25,1.6) {$\widehat{u}$}; 
\node[] (a1) at (4.75,2.5) {$v$};
\node[] (a1) at (4.5,2.5) {$\bullet$};
\draw (0,0)--(0,3)--(6,3)--(6,0)--(0,0);
\draw (3,0)--(3,3);
\end{tikzpicture}
\caption{The case when $d_{P}(\widehat{u})=1$.}
\end{subfigure}
\begin{subfigure}[b]{0.45\textwidth}
\begin{tikzpicture}[scale=1]
\draw[dashed, thick] (1.5,2.5)--(1.5,1.5)--(4.5,2.5);
\draw[dashed, thick](1.5,0.2)--(2.5,0.5);
\node[] (a1) at (1.5,3.2) {$(i,j)$}; 
\node[] (a1) at (4.5,3.2) {$(i',j')$};
\node[] (a1) at (1.25,2.5) {$u$};
\node[red] (a1) at (1.5,2.5) {$\bullet$}; 
\node[] (a1) at (1.5,1.5) {$\bullet$}; 
\node[blue] (a1) at (0.5,0.5) {$\bullet$}; 
\node[blue] (a1) at (2.5,0.5) {$\bullet$}; 
\node[] (a1) at (1.25,1.6) {$\widehat{u}$}; 
\node[] (a1) at (4.75,2.5) {$v$};
\node[] (a1) at (4.5,2.5) {$\bullet$};
\draw (0,0)--(0,3)--(6,3)--(6,0)--(0,0);
\draw (3,0)--(3,3);
\end{tikzpicture}
\caption{Rerouting when $d_{P}(\widehat{u})=1$.}
\end{subfigure}
\vspace{0.3cm}
\begin{subfigure}[b]{0.45\textwidth}
\begin{tikzpicture}[scale=1]
 \path [draw=green,snake it,thick] (1,2.7)--(5,2.7);
 \path [draw=green,snake it,thick](0.5,0.5)--(1.5,1.5);
 \path [draw=green,snake it,thick](1.5,1.5)--(2.5,0.5);
\draw[red] (1.5,2.5)--(4.5,2.5);
\draw[blue] (1.5,1.5)--(4.5,2.5);
\node[] (a1) at (1.5,3.2) {$(i,j)$}; 
\node[] (a1) at (4.5,3.2) {$(i',j')$};
\node[] (a1) at (1.25,2.5) {$u$};
\node[red] (a1) at (1.5,2.5) {$\bullet$}; 
\node[] (a1) at (1.5,1.5) {$\bullet$}; 
\node[blue] (a1) at (0.5,0.5) {$\bullet$}; 
\node[blue] (a1) at (2.5,0.5) {$\bullet$}; 
\node[] (a1) at (1.25,1.6) {$\widehat{u}$}; 
\node[] (a1) at (4.75,2.5) {$v$};
\node[] (a1) at (4.5,2.5) {$\bullet$};
\draw (0,0)--(0,3)--(6,3)--(6,0)--(0,0);
\draw (3,0)--(3,3);
\end{tikzpicture}
\caption{The case when $d_{P}(\widehat{u})=2$.}
\end{subfigure}
\begin{subfigure}[b]{0.45\textwidth}
\begin{tikzpicture}[scale=1]
\draw[dashed, thick] (1.5,2.5)--(1.5,1.5)--(4.5,2.5);
\draw[dashed, thick] (0.5,0.5)--(2.5,0.5);
\node[] (a1) at (1.5,3.2) {$(i,j)$}; 
\node[] (a1) at (4.5,3.2) {$(i',j')$};
\node[] (a1) at (1.25,2.5) {$u$};
\node[red] (a1) at (1.5,2.5) {$\bullet$}; 
\node[] (a1) at (1.5,1.5) {$\bullet$}; 
\node[blue] (a1) at (0.5,0.5) {$\bullet$}; 
\node[blue] (a1) at (2.5,0.5) {$\bullet$}; 
\node[] (a1) at (1.25,1.6) {$\widehat{u}$}; 
\node[] (a1) at (4.75,2.5) {$v$};
\node[] (a1) at (4.5,2.5) {$\bullet$};
\draw (0,0)--(0,3)--(6,3)--(6,0)--(0,0);
\draw (3,0)--(3,3);
\end{tikzpicture}
\caption{Rerouting when $d_{P}(\widehat{u})=2$.}
\end{subfigure}
\end{center}
\caption{Case I in the proof of Lemma~\ref{lem:propertyMaMain}. Vertices colored black and red are marked and unmarked, respectively. Vertices colored blue are either marked or unmarked. Good and bad edges are colored blue and red, respectively. Curves colored green are part of the path $P$. Dashed lines are part of the path $P_2$.}
\label{fig:caseone}
\end{figure}

\begin{lemma}\label{lem:propertyMaMain}
Let $G$ be a clique-grid with representation $f$ that has a path (resp.~cycle) on at least $k$ vertices. Then, $G$ also has a path (resp.~cycle) $P$ on at least $k$ vertices with the following property: every edge $\{u,v\}\in E(P)$ where $f(u)\neq f(v)$ is a good edge.
\end{lemma}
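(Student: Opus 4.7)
My plan is to take a path $P^{\star}$ on at least $k$ vertices satisfying the ``few crossings'' condition of Proposition~\ref{prop:fewCrossings} (so for every two cells, $P^{\star}$ contains at most five cross-cell edges between them), and among all such paths, to select one, $P$, minimizing the number of bad edges. The goal is to show this minimum is zero. Suppose toward contradiction that $P$ contains a bad edge $\{u,v\}$ with $f(u)=(i,j)$, $f(v)=(i',j')$, and without loss of generality $u$ unmarked. I will build a path $P'$ on at least $k$ vertices, still satisfying the few-crossings condition, with strictly fewer bad edges, contradicting minimality.

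The case analysis splits on whether $v$ is marked. If $v$ is marked, then because $u$ is an unmarked neighbor of $v$ lying in $f^{-1}(i,j)$, the set $\ma_2(v,(i,j))$ chosen in Phase~II must have full size $121$: otherwise $\ma_2(v,(i,j))$ would contain every neighbor of $v$ in $f^{-1}(i,j)$, including $u$, and $u$ would be marked. So I have $121$ marked candidates $\widehat{u}\in f^{-1}(i,j)$, each joined to $v$ by a good edge. If $v$ is unmarked too, I invoke Phase~I: let $M$ be the maximal matching between $f^{-1}(i,j)$ and $f^{-1}(i',j')$ picked in the scheme. Were $|M|\le 241$, then $\ma_1(\{(i,j),(i',j')\})=M$ and neither $u$ nor $v$ would be in $M$, whence $M\cup\{\{u,v\}\}$ would be a larger matching, contradicting maximality of $M$. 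Hence $|M|>241$, and $\ma_1(\{(i,j),(i',j')\})$ contains $241$ matched pairs, each a good edge with two marked endpoints.

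Armed with this large pool of marked replacements, I perform a local surgery on $P$ to eliminate $\{u,v\}$. In Case~I (with $v$ marked), the cleanest surgery is \emph{insertion}: pick $\widehat{u}\in\ma_2(v,(i,j))\setminus V(P)$ and splice $\widehat{u}$ between $u$ and $v$ in $P$, replacing the single bad edge $\{u,v\}$ with the intra-cell clique edge $\{u,\widehat{u}\}$ (not cross-cell, hence not bad) and the good edge $\{\widehat{u},v\}$. If every candidate in $\ma_2(v,(i,j))$ already lies in $V(P)$, I perform the short-circuiting rearrangements illustrated in Figure~\ref{fig:caseone}, distinguishing whether the chosen $\widehat{u}$ has degree $1$ or $2$ in $P$ and using the clique structure of $f^{-1}(i,j)$ to splice path fragments together. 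In Case~II, the analogous surgery uses a matched pair $\{u',v'\}\in\ma_1(\{(i,j),(i',j')\})$ to reroute through the good edge $\{u',v'\}$, again exploiting the clique structures of $f^{-1}(i,j)$ and $f^{-1}(i',j')$ to connect $u'$ and $v'$ to the rest of $P$. Since all modifications are confined to vertices in $f^{-1}(i,j)\cup f^{-1}(i',j')$, no new cross-cell edges involving any \emph{other} pair of cells are created, and the count of cross-cell edges between $(i,j)$ and $(i',j')$ does not increase; in particular the few-crossings invariant is maintained.

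The main obstacle is verifying that the surgery always succeeds: the chosen replacement must not collide with $V(P)$ in a way that destroys simplicity or shortens the path below $k$ vertices. This is precisely where the large constants $121$ and $241$ built into the marking scheme earn their keep: the number of vertices that must be avoided (the at-most-two path neighbors of $u$ and of $v$, together with the at-most-two path neighbors of the replacement in the short-circuiting sub-cases) is a small absolute constant, so a plain pigeonhole argument yields a usable replacement from the pool. Each successful surgery strictly decreases the number of bad edges, and since this quantity is finite and non-negative, the process terminates with a bad-edge-free path; the argument for \probKCycle{} is identical with ``path'' replaced by ``cycle'' throughout.
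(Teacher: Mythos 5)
Your overall strategy mirrors the paper's: start from a few-crossings path via Proposition~\ref{prop:fewCrossings}, minimize the number of bad edges, and reroute through marked replacements drawn from the Phase~I matching or the Phase~II neighbor pool. But your justification of \emph{why a suitable replacement exists}---the crux of the argument---is wrong. You claim that the constants $121$ and $241$ earn their keep because ``the number of vertices that must be avoided (the at-most-two path neighbors of $u$ and of $v$, together with the at-most-two path neighbors of the replacement in the short-circuiting sub-cases) is a small absolute constant.'' If that were the right count, constants around $7$ and $13$ would suffice. What must actually hold is that the chosen replacement $\widehat{u}$, if it already lies on $P$, has \emph{all} of its $P$-neighbors inside $f^{-1}(i,j)$. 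Otherwise, deleting $\widehat{u}$ from $P$ and joining its two $P$-neighbors can create an edge between two other cells (or an edge that does not even exist in $G$), contradicting your claim that the surgery is ``confined to $f^{-1}(i,j)\cup f^{-1}(i',j')$'' and potentially destroying the few-crossings invariant your induction relies on. The correct counting is a consequence of Proposition~\ref{prop:fewCrossings} together with the clique-grid neighborhood structure: at most $5$ $P$-edges cross between any two cells and each cell interacts with at most $24$ others, so at most $24\cdot 5 = 120$ vertices in $f^{-1}(i,j)\cap V(P)$ have a $P$-neighbor outside the cell. That is precisely why $121 = 120+1$ (Phase~II) and $241 = 2\cdot 120 + 1$ (Phase~I, one budget of $120$ per endpoint of a matched pair) appear in the scheme. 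Your sketch never identifies or uses this bound, so the pigeonhole step is not actually established.

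A secondary issue: you split into ``$v$ marked'' vs.\ ``$v$ unmarked,'' but the Phase~II pool $\ma_2(v,(i,j))$ is only defined for $v\in\ma_1(i',j')$, i.e., $v$ marked during Phase~I. If $v$ is marked only through Phase~II, your Case~I has nothing to appeal to. That subcase ($v\in\ma^\star(i',j')\setminus\ma_1(i',j')$) must instead fall under the Case~II matching argument, which requires only that $v\notin\ma_1(i',j')$ so that $\{u,v\}$ can be added to the maximal matching without violating the matching property.
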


\begin{proof}
By Proposition \ref{prop:fewCrossings}, $G$ has a path (resp.~cycle) on at least $k$ vertices with the following property: for every two distinct cells $(i,j)$ and $(i',j')$, there exist at most $5$ edges $\{u,v\}\in E(P)$ such that $f(u)=(i,j)$ and $f(v)=(i',j')$.
Among all such paths (resp.~cycles), let $P$ be one that minimizes the number of bad edges. The following claim follows immediately from the choice of $P$ and Property \ref{condition:GridClique2} in Definition \ref{def:GridClique}.

\begin{claim}\label{claim1}
For each cell $(i,j)\in[t]\times[t]$, there are at most $24\cdot 5=120$ vertices in $f^{-1}(i,j)\cap V(P)$ that are adjacent in $P$ to at least one vertex that does not belong to $f^{-1}(i,j)$.
\end{claim}

Next, we show that $P$ has no bad edge, which will complete the proof.
Targeting a contradiction, suppose that $P$ has some bad edge $\{u,v\}$. By Definition \ref{def:goodEdge}, $u\notin\ma^\star(i,j)$ or $v\notin\ma^\star(i',j')$ (or both) where $f(u)=(i,j)$ and $f(v)=(i',j')$. Without loss of generality, suppose that $u\notin\ma^\star(i,j)$. We consider two cases as follows. 

\medskip
\noindent{\bf Case I.} First, suppose that $v\in\ma_1(i',j')$. Because $u$ is adjacent to $v$ but it is not marked in the second phase, it must hold that $|\ma_2(v,(i,j))|\geq 121$. By Claim \ref{claim1}, this means that there exists a vertex $\widehat{u}\in \ma_2(v,(i,j))$ such the vertices incident to it on $P$---which might be $0$ if $\widehat{u}$ does not belong to $P$, $1$ if it is an endpoint of $P$ or $2$ if it is an internal vertex of $P$---also belong to $f^{-1}(i,j)$ (see Fig.~\ref{fig:caseone}). In case $\widehat{u}\notin V(P)$, denote $P_1=P$. Else, by Property \ref{condition:GridClique1} in Definition \ref{def:GridClique}, by removing $\widehat{u}$ from $P$, and if $\widehat{u}$ has two neighbors on $P$, then also making these two neighbors adjacent,\footnote{If $\widehat{u}$ is an endpoint of $P$, then only the removal of $\widehat{u}$ is performed.} we still have a path (resp.~cycle) in $G$; we denote this path by $P_1$. Note that $|V(P_1)|\geq |V(P)|-1$ and $u\notin V(P_1)$. Now, note that because $\widehat{u}\in \ma_2(v,(i,j))$, we have that $\widehat{u}$ is adjacent to $v$ in $G$ and also $\widehat{u}\in f^{-1}(i,j)$. Because $u\in f^{-1}(i,j)$, Property \ref{condition:GridClique1} in Definition \ref{def:GridClique} implies that $\widehat{u}$ is also adjacent to $u$. Thus, by inserting $\widehat{u}$ between $u$ and $v$ in $P_1$ and making it adjacent to both, we still have a path (resp.~cycle) in $G$, which we denote by $P_2$ (see Fig.~\ref{fig:caseone}). Note that $|V(P_2)|=|V(P_1)|+1\geq |V(P)|\geq k$. Moreover, the only edges that appear only in one among $P_2$ and $P$ are as follows.
\begin{enumerate}
\item If $\widehat{u}$ has two neighbors in $P$, then the edges between $\widehat{u}$ and these two neighbors might belong only to $P$, and the edge between these two neighbors belongs only to $P_2$. As $\widehat{u}$ and its neighbors in $P$ belong to the same cell (by the choice of $\widehat{u}$), none of these edges is bad, and also none of these edges crosses different cells.

\item If $\widehat{u}$ has only one neighbor in $P$, then the edge between $\widehat{u}$ and this neighbor might belong only to $P$.

\item $\{u,v\}\in E(P)\setminus E(P_2)$ is a bad edge that crosses different cells by its initial choice.

\item $\{u,\widehat{u}\}$ might belong only to $P_2$, and it is a neither a bad edge nor an edge that crosses different cells because $u$ and $\widehat{u}$ belong to the same cell.
\item $\{\widehat{u},v\}\in E(P_2)\setminus E(P)$ is a not a bad edge because both $\widehat{u}$ and $v$ are marked (since $v\in\ma_1(i',j')$ and $\widehat{u}\in \ma_2(v,(i,j))$), but it crosses different cells.
\end{enumerate}
Thus, $P_2$ has no bad edge that does not belong to $P$, and $P$ has at least one bad edge that does not belong to $P_2$  (specifically, $\{u,v\}$), and therefore $P_2$ has fewer bad edges than $P$. Moreover, notice that the items above also imply that $P_2$ has at most one edge that crosses different cells and does not belong to $P$ (specifically, $\{\widehat{u},v\}$), and $P$ has at least one edge that crosses the {\em same} cells and does not belong to $P_2$ (specifically, $\{u,v\}$). Therefore, $P_2$ also has the property of $P$ that for every two distinct cells $(\widetilde{i},\widetilde{j})$ and $(\widetilde{i}',\widetilde{j}')$, there exist at most $5$ edges $\{\widetilde{u},\widetilde{v}\}\in E(P_2)$ such that $f(\widetilde{u})=(\widetilde{i},\widetilde{j})$ and $f(\widetilde{v})=(\widetilde{i}',\widetilde{j}')$. Therefore, we have reached a contradiction to the minimality of the number of bad edges in our choice of $P$.

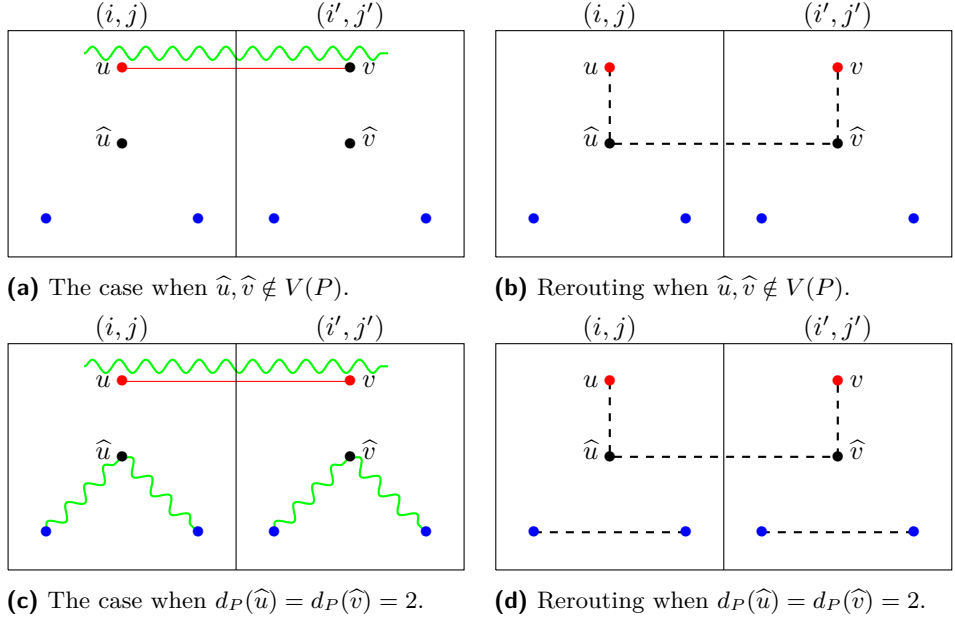
\begin{figure}
\begin{center}
\begin{subfigure}[b]{0.45\textwidth}
\begin{tikzpicture}[scale=1]
 \path [draw=green,snake it,thick] (1,2.7)--(5,2.7);
\draw[red] (1.5,2.5)--(4.5,2.5);
\node[] (a1) at (1.5,3.2) {$(i,j)$}; 
\node[] (a1) at (4.5,3.2) {$(i',j')$};
\node[] (a1) at (1.25,2.5) {$u$};
\node[red] (a1) at (1.5,2.5) {$\bullet$}; 
\node[] (a1) at (1.5,1.5) {$\bullet$}; 
\node[] (a1) at (1.25,1.6) {$\widehat{u}$}; 
\node[blue] (a1) at (0.5,0.5) {$\bullet$}; 
\node[blue] (a1) at (2.5,0.5) {$\bullet$}; 
\node (a1) at (4.5,1.5) {$\bullet$}; 
\node[] (a1) at (4.75,1.6) {$\widehat{v}$}; 
\node[blue] (a1) at (3.5,0.5) {$\bullet$}; 
\node[blue] (a1) at (5.5,0.5) {$\bullet$}; 
\node[] (a1) at (4.75,2.5) {$v$};
\node[] (a1) at (4.5,2.5) {$\bullet$};
\draw (0,0)--(0,3)--(6,3)--(6,0)--(0,0);
\draw (3,0)--(3,3);
\end{tikzpicture}
\caption{The case when $\widehat{u},\widehat{v}\notin V(P)$.}
\end{subfigure}
\begin{subfigure}[b]{0.45\textwidth}
\begin{tikzpicture}[scale=1]
\draw[dashed, thick] (1.5,2.5)--(1.5,1.5)--(4.5,1.5)--(4.5,2.5);
\node[] (a1) at (1.5,3.2) {$(i,j)$}; 
\node[] (a1) at (4.5,3.2) {$(i',j')$};
\node[] (a1) at (1.25,2.5) {$u$};
\node[red] (a1) at (1.5,2.5) {$\bullet$}; 
\node[] (a1) at (1.5,1.5) {$\bullet$}; 
\node[blue] (a1) at (0.5,0.5) {$\bullet$}; 
\node[blue] (a1) at (2.5,0.5) {$\bullet$}; 
\node[] (a1) at (1.25,1.6) {$\widehat{u}$}; 
\node[] (a1) at (4.75,2.5) {$v$};
\node[red] (a1) at (4.5,2.5) {$\bullet$};
\node (a1) at (4.5,1.5) {$\bullet$}; 
\node[] (a1) at (4.75,1.6) {$\widehat{v}$}; 
\node[blue] (a1) at (3.5,0.5) {$\bullet$}; 
\node[blue] (a1) at (5.5,0.5) {$\bullet$}; 
\draw (0,0)--(0,3)--(6,3)--(6,0)--(0,0);
\draw (3,0)--(3,3);
\end{tikzpicture}
\caption{Rerouting when $\widehat{u},\widehat{v}\notin V(P)$.}
\end{subfigure}
\vspace{0.3cm}
\begin{subfigure}[b]{0.45\textwidth}
\begin{tikzpicture}[scale=1]
 \path [draw=green,snake it,thick] (1,2.7)--(5,2.7);
 \path [draw=green,snake it,thick](0.5,0.5)--(1.5,1.5);
 \path [draw=green,snake it,thick](1.5,1.5)--(2.5,0.5);
 \path [draw=green,snake it,thick](3.5,0.5)--(4.5,1.5);
 \path [draw=green,snake it,thick](4.5,1.5)--(5.5,0.5);
\draw[red] (1.5,2.5)--(4.5,2.5);
\node[] (a1) at (1.5,3.2) {$(i,j)$}; 
\node[] (a1) at (4.5,3.2) {$(i',j')$};
\node[] (a1) at (1.25,2.5) {$u$};
\node[red] (a1) at (1.5,2.5) {$\bullet$}; 
\node[] (a1) at (1.5,1.5) {$\bullet$}; 
\node[blue] (a1) at (0.5,0.5) {$\bullet$}; 
\node[blue] (a1) at (2.5,0.5) {$\bullet$}; 
\node[] (a1) at (1.25,1.6) {$\widehat{u}$}; 
\node[] (a1) at (4.75,2.5) {$v$};
\node[red] (a1) at (4.5,2.5) {$\bullet$};
\node (a1) at (4.5,1.5) {$\bullet$}; 
\node[] (a1) at (4.75,1.6) {$\widehat{v}$}; 
\node[blue] (a1) at (3.5,0.5) {$\bullet$}; 
\node[blue] (a1) at (5.5,0.5) {$\bullet$}; 
\draw (0,0)--(0,3)--(6,3)--(6,0)--(0,0);
\draw (3,0)--(3,3);
\end{tikzpicture}
\caption{The case when $d_{P}(\widehat{u})=d_{P}(\widehat{v})=2$.}
\end{subfigure}
\begin{subfigure}[b]{0.45\textwidth}
\begin{tikzpicture}[scale=1]
\draw[dashed, thick] (1.5,2.5)--(1.5,1.5)--(4.5,1.5)--(4.5,2.5);
\draw[dashed, thick] (0.5,0.5)--(2.5,0.5);
\draw[dashed, thick] (3.5,0.5)--(5.5,0.5);
\node[] (a1) at (1.5,3.2) {$(i,j)$}; 
\node[] (a1) at (4.5,3.2) {$(i',j')$};
\node[] (a1) at (1.25,2.5) {$u$};
\node[red] (a1) at (1.5,2.5) {$\bullet$}; 
\node[] (a1) at (1.5,1.5) {$\bullet$}; 
\node[blue] (a1) at (0.5,0.5) {$\bullet$}; 
\node[blue] (a1) at (2.5,0.5) {$\bullet$}; 
\node[] (a1) at (1.25,1.6) {$\widehat{u}$}; 
\node[] (a1) at (4.75,2.5) {$v$};
\node[red] (a1) at (4.5,2.5) {$\bullet$};
\node (a1) at (4.5,1.5) {$\bullet$}; 
\node[] (a1) at (4.75,1.6) {$\widehat{v}$}; 
\node[blue] (a1) at (3.5,0.5) {$\bullet$}; 
\node[blue] (a1) at (5.5,0.5) {$\bullet$}; 
\draw (0,0)--(0,3)--(6,3)--(6,0)--(0,0);
\draw (3,0)--(3,3);
\end{tikzpicture}
\caption{Rerouting when $d_{P}(\widehat{u})=d_{P}(\widehat{v})=2$.}
\end{subfigure}
\end{center}
\caption{Two subcases of Case II in the proof of Lemma~\ref{lem:propertyMaMain}. Other subcases are handled similarly to the subcases depicted here. Vertices colored black and red are marked and unmarked, respectively. Vertices colored blue are either marked or unmarked. Good and bad edges are colored blue and red, respectively. Curves colored green are part of the path $P$. Dashed lines are part of the path $P_2$.}
\label{fig:casetwo}
\end{figure}

\medskip
\noindent{\bf Case II.} Second, suppose that $v\notin\ma^\star(i',j')$. Then, the addition of $\{u,v\}$ to $\ma_1(\{(i,j),$ $(i',j')\})$ maintains the property that it is a matching. Therefore, because this edge was not marked in the first phase, it must hold that  $|\ma_1(\{(i,j),(i',j')\})|=241$. By Claim \ref{claim1}, there are at most $120$ vertices in $f^{-1}(i,j)\cap V(P)$ that are adjacent in $P$ to at least one vertex that does not belong to $f^{-1}(i,j)$, and notice that $u$ (which is unmarked) is one of them. Similarly, there are at most $120$ vertices in $f^{-1}(i',j')\cap V(P)$ that are adjacent in $P$ to at least one vertex that does not belong to $f^{-1}(i',j')$, and notice that $v$ (which is unmarked) is one of them. Therefore, because $\ma_1(\{(i,j),(i',j')\})$ is a matching, it must contain at least one edge $\{\widehat{u},\widehat{v}\}$ such that neither $\widehat{u}$ nor $\widehat{v}$ has a neighbor in $P$ that belongs to a different cell than itself (see Fig.~\ref{fig:casetwo})---either because $\widehat{u}$ (and in the same way $\widehat{v}$) does not belong to $P$, or it does and all its (one or two) neighbors belong to the same cell as itself. Define $P'_1$ as follows: if $\widehat{u}$ does not belong to $P$, then $P'_1=P$, and otherwise let it be the graph obtained by removing $\widehat{u}$  from $P$ and making its two neighbors (if both exist) adjacent. Because these two neighbors (if they exist) belong to the same cell, Property \ref{condition:GridClique1} in Definition \ref{def:GridClique} implies that $P_1'$ is a path (resp.~cycle) in $G$. Similarly, let $P_1$ be the path (resp.~cycle) obtained by the same operation with respect to $P_1'$ and $\widehat{v}$. Now, let $P_2$ be the graph obtained from $P_1$ by inserting $\widehat{u}$ and $\widehat{v}$ between $u$ and $v$ with the edges $\{u,\widehat{u}\},\{\widehat{u},\widehat{v}\}$ and $\{\widehat{v},v\}$ (see Fig.~\ref{fig:casetwo}). Because of Property \ref{condition:GridClique1} in Definition \ref{def:GridClique}, and since $u$ and $\widehat{u}$ belong to the same cell, they are adjacent in $G$. Similarly, $v$ and $\widehat{v}$ are adjacent in $G$. Moreover, because $\{\widehat{u},\widehat{v}\}\in \ma_1(\{(i,j),(i',j')\})$, it is an edge in $G$. Thus, $P_2$ is a path (resp.~cycle) in $G$. Additionally, $V(P)\subseteq V(P_2)$, and therefore $|V(P_2)|\geq k$. The only edges that appear only in one among $P_2$ and $P$ are~as~follows.
\begin{enumerate}
\item If $\widehat{u}$ belongs to $P$ and has two neighbors in $P$, then the edges between $\widehat{u}$ and these two neighbors might belong only to $P$, and the edge between these two neighbors belongs only to $P_2$. As $\widehat{u}$ and its neighbors in $P$ belong to the same cell (by the choice of $\widehat{u}$), none of these edges is bad, and none of them crosses different cells. The same holds for $\widehat{v}$.

\item If $\widehat{u}$ belongs to $P$ and has only one neighbor in $P$, the edge between $\widehat{u}$ and this neighbor might belong only to $P$. The same holds for $\widehat{v}$.

\item $\{u,v\}\in E(P)\setminus E(P_2)$ is a bad edge that crosses different cells by its initial choice.

\item $\{u,\widehat{u}\}$ might belong only to $P_2$, and it is a neither a bad edge nor it crosses different cells because $u$ and $\widehat{u}$ belong to the same cell. The same holds for $\{v,\widehat{v}\}$.

\item $\{\widehat{u},\widehat{v}\}\in E(P_2)\setminus E(P)$ is a not a bad edge because both $\widehat{u}$ and $\widehat{v}$ are marked (since $\{\widehat{u},\widehat{v}\}\in \ma_1(\{(i,j),(i',j')\})$), but it crosses different cells.
\end{enumerate}

Thus, $P_2$ has no bad edge that does not belong to $P$, and $P$ has at least one bad edge that does not belong to $P_2$  (specifically, $\{u,v\}$), and therefore $P_2$ has fewer bad edges than $P$. Moreover, notice that the items above also imply that $P_2$ has at most one edge that crosses different cells and does not belong to $P$ (specifically, $\{\widehat{u},\widehat{v}\}$), and $P$ has at least one edge that crosses the {\em same} cells and does not belong to $P_2$ (specifically, $\{u,v\}$). Therefore, $P_2$ also has the property of $P$ that for every two distinct cells $(\widetilde{i},\widetilde{j})$ and $(\widetilde{i}',\widetilde{j}')$, there exist at most $5$ edges $\{\widetilde{u},\widetilde{v}\}\in E(P_2)$ such that $f(\widetilde{u})=(\widetilde{i},\widetilde{j})$ and $f(\widetilde{v})=(\widetilde{i}',\widetilde{j}')$. Therefore, we have reached a contradiction to the minimality of the number of bad edges in our choice of $P$.
 
\medskip
\noindent In both cases we have reached a contradiction, and therefore the proof is complete.
\end{proof}

Next, we further strengthen Lemma \ref{lem:propertyMaMain} with the following definition and Lemma \ref{lem:propertyMa}. Intuitively, the following definition says that a cell is good  with respect to some path if either none of its unmarked vertices is traversed by that path, or all of its unmarked vertices are traversed by that path consecutively and can be ``flanked'' only by marked vertices (see Fig.~\ref{fig:goodcell}).

\begin{definition}\label{def:goodCell}
Let $G$ be a clique-grid with representation $f$. Let $P$ be a path (resp.~cycle) in $G$ with endpoints $x,y$ (resp.~no endpoints). We say that a cell $(i,j)\in[t]\times[t]$ is {\em good} if {\bf (i)} $V(P)=f^{-1}(i,j)\setminus\ma^\star(i,j)$, or  {\bf (ii)} $V(P)\cap(f^{-1}(i,j)\setminus\ma^\star(i,j))=\emptyset$, or {\bf (iii)} there exist distinct $u,v\in (V(P)\cap \ma^\star(i,j))\cup(\{x,y\}\cap f^{-1}(i,j))$ (resp.~not necessarily distinct $u,v\in V(P)\cap \ma^\star(i,j)$) such that the set $I$ of internal vertices of the (resp.~a)  subpath of $P$ between $u$ and $v$ is precisely $f^{-1}(i,j)\setminus(\ma^\star(i,j)\cup\{u,v\})$;\footnote{In other words, $I\subseteq f^{-1}(i,j)\setminus\ma^\star(i,j)$ and $(f^{-1}(i,j)\setminus\ma^\star(i,j))\setminus I$ can only include endpoints of this subpath, in which case $P$ is a path and any included endpoint is an endpoint of $P$ as well.} otherwise, it is {\em bad}.
\end{definition}

\begin{figure}
\begin{center}
\begin{subfigure}[b]{0.24\textwidth}
\begin{tikzpicture}[scale=1]
\draw[green,thick] (0.5,2.5)--(2.5,2.5);
\node[red] (a1) at (1.5,2.5) {$\bullet$}; 
\node[] (a1) at (1.5,1.5) {$\bullet$}; 
\node[] (a1) at (0.5,0.5) {$\bullet$}; 
\node[] (a1) at (2.5,0.5) {$\bullet$}; 
\node[] (a1) at (1.5,0.5) {$\bullet$}; 
\node[red] (a1) at (0.5,2.5) {$\bullet$}; 
\node[red] (a1) at (2.5,2.5) {$\bullet$}; 
\draw (0,0)--(0,3)--(3,3)--(3,0)--(0,0);
\end{tikzpicture}
\end{subfigure}
\begin{subfigure}[b]{0.24\textwidth}
\begin{tikzpicture}[scale=1]
\draw[green,thick] (0.5,0.5)--(1.5,0.5)--(1.5,1.5);
\draw[green,thick] (-0.3,1) to[out=-90,in=180] (0.5,0.5);
\draw[green,thick] (1.8,3.2) to[out=-90,in=45] (1.5,1.5);
\node[red] (a1) at (1.5,2.5) {$\bullet$}; 
\node[] (a1) at (1.5,1.5) {$\bullet$}; 
\node[] (a1) at (0.5,0.5) {$\bullet$}; 
\node[] (a1) at (2.5,0.5) {$\bullet$}; 
\node[] (a1) at (1.5,0.5) {$\bullet$}; 
\node[red] (a1) at (0.5,2.5) {$\bullet$}; 
\node[red] (a1) at (2.5,2.5) {$\bullet$}; 
\draw (0,0)--(0,3)--(3,3)--(3,0)--(0,0);
\end{tikzpicture}
\end{subfigure}
\begin{subfigure}[b]{0.24\textwidth}
\begin{tikzpicture}[scale=1]
\draw[green,thick] (0.5,2.5)--(2.5,2.5)--(1.5,1.5)--(0.5,2.5);
\node[red] (a1) at (1.5,2.5) {$\bullet$}; 
\node[] (a1) at (1.5,1.5) {$\bullet$}; 
\node[] (a1) at (0.5,0.5) {$\bullet$}; 
\node[] (a1) at (2.5,0.5) {$\bullet$}; 
\node[] (a1) at (1.5,0.5) {$\bullet$}; 
\node[red] (a1) at (0.5,2.5) {$\bullet$}; 
\node[red] (a1) at (2.5,2.5) {$\bullet$}; 
\draw (0,0)--(0,3)--(3,3)--(3,0)--(0,0);
\end{tikzpicture}
\end{subfigure}
\begin{subfigure}[b]{0.24\textwidth}
\begin{tikzpicture}[scale=1]
 \draw[green,thick] (0.5,0.5)--(0.5,2.5)--(2.5,2.5);
\draw[green,thick] (-0.3,1) to[out=-90,in=180] (0.5,0.5);
\node[red] (a1) at (1.5,2.5) {$\bullet$}; 
\node[] (a1) at (1.5,1.5) {$\bullet$}; 
\node[] (a1) at (0.5,0.5) {$\bullet$}; 
\node[] (a1) at (2.5,0.5) {$\bullet$}; 
\node[] (a1) at (1.5,0.5) {$\bullet$}; 
\node[red] (a1) at (0.5,2.5) {$\bullet$}; 
\node[red] (a1) at (2.5,2.5) {$\bullet$}; 
\draw (0,0)--(0,3)--(3,3)--(3,0)--(0,0);
\end{tikzpicture}
\end{subfigure}
\end{center}
\caption{Illustration of  good cells. Vertices colored black and red are marked and unmarked vertices, respectively. The green curve represents the path/cycle $P$.}
\label{fig:goodcell}
\end{figure}

It will be convenient to have, as an intermediate step, a definition and lemma that are weaker than Definition \ref{def:goodCell} and Lemma \ref{lem:propertyMa}. Intuitively, this definition drops the requirement that none or all the unmarked vertices of a cell should be visited by the path at hand, but only requires that those unmarked vertices that are visited, are visited consecutively and can be ``flanked'' only by marked vertices (see Fig.~\ref{fig:nicecell}).

\begin{definition}\label{def:niceCell}
Let $G$ be a clique-grid with representation $f$. Let $P$ be a path (resp.~cycle) in $G$ with endpoints $x,y$ (resp.~no endpoints). We say that a cell $(i,j)\in[t]\times[t]$ is {\em nice} if {\bf (i)} $V(P)\subseteq f^{-1}(i,j)\setminus\ma^\star(i,j)$, or {\bf (ii)} $V(P)\cap(f^{-1}(i,j)\setminus\ma^\star(i,j))=\emptyset$, or {\bf (iii)}  there exist distinct $u,v\in (V(P)\cap \ma^\star(i,j))\cup(\{x,y\}\cap f^{-1}(i,j))$ (resp.~not necessarily distinct $u,v\in V(P)\cap \ma^\star(i,j)$) such that the set of internal vertices of the (resp.~a)  subpath of $P$ between $u$ and $v$ is  precisely $(V(P)\cap f^{-1}(i,j))\setminus(\ma^\star(i,j)\cup\{u,v\})$.
\end{definition}

\begin{figure}
\begin{center}
\begin{tikzpicture}[scale=1]
\draw[green,thick] (0.5,2.5)--(2.5,2.5);
\node[red] (a1) at (1.5,2.5) {$\bullet$}; 
\node[red] (a1) at (1.5,1.5) {$\bullet$}; 
\node[] (a1) at (0.5,0.5) {$\bullet$}; 
\node[] (a1) at (2.5,0.5) {$\bullet$}; 
\node[] (a1) at (1.5,0.5) {$\bullet$}; 
\node[red] (a1) at (0.5,2.5) {$\bullet$}; 
\node[red] (a1) at (2.5,2.5) {$\bullet$}; 
\draw (0,0)--(0,3)--(3,3)--(3,0)--(0,0);
\end{tikzpicture}
\end{center}
\caption{A nice cell which is not good. Vertices colored black and red are marked and unmarked vertices, respectively. The green curve represents the path $P$.}
\label{fig:nicecell}
\end{figure}

\begin{lemma}\label{lem:propertyMaHelper}
Let $G$ be a clique-grid with representation $f$ that has a path (resp.~cycle) on at least $k$ vertices. Then, $G$ also has a path (resp.~cycle) $P$ on at least $k$ vertices with the following property:  every cell $(i,j)\in[t]\times[t]$ is nice.
\end{lemma}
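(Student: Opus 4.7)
The plan is to start from the path (resp.~cycle) $P$ supplied by Lemma \ref{lem:propertyMaMain}---which already has the property that every cross-cell edge is good---and then massage $P$ cell by cell so that every cell becomes nice while preserving $|V(P)|$ and the good-edge property. The surgeries we perform will all be confined to a single cell and will never touch any visit-boundary vertex, so rearrangements for distinct cells are independent and can be done one after another without undoing earlier progress.

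Fix a cell $(i,j)$ and write $V_{ij}=V(P)\cap f^{-1}(i,j)$, $U_{ij}=V_{ij}\setminus\ma^\star(i,j)$. Call the \emph{visits} of $P$ to $(i,j)$ the maximal subpaths of $P$ inside $f^{-1}(i,j)$. If $U_{ij}=\emptyset$ or $V_{ij}=V(P)$ we are done by conditions (ii) or (i) of Definition \ref{def:niceCell}; otherwise two observations drive the rearrangement. First, by Lemma \ref{lem:propertyMaMain}, every boundary vertex of a visit that is not a $P$-endpoint must be marked, so every unmarked vertex of $U_{ij}$ sits strictly inside its visit with both of its $P$-neighbours also in $(i,j)$. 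Second, because $V_{ij}\neq V(P)$, no visit can contain both $P$-endpoints, so every visit has at least one marked boundary. Pick any visit $V^\star$ meeting $U_{ij}$ and relocate the rest of $U_{ij}$ into $V^\star$: for each $u\in U_{ij}$ currently in some other visit $V_s$, delete $u$ from $V_s$ and bridge its two in-cell $P$-neighbours (valid because $(i,j)$ is a clique by Property~\ref{condition:GridClique1} of Definition \ref{def:GridClique}), then splice $u$ into $V^\star$ between two of its existing in-cell vertices. No visit-boundary is altered, so all cross-cell edges---and hence the good-edge property---are preserved.

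It remains to reorder $V^\star$ internally so that $U_{ij}$ forms a single flanked block. Let $a,b$ be the boundary vertices of $V^\star$; by the second observation, at least one of them is marked. Using once more that $(i,j)$ induces a clique, we arrange $V^\star$ as ``$a$ -- (marked interior of $V^\star$) -- ($U_{ij}$ in any order) -- $b$'', or symmetrically as ``$a$ -- ($U_{ij}$) -- (marked interior) -- $b$'' when $a$ is an unmarked $P$-endpoint, so that such an $a\in U_{ij}$ lands at the natural end of the $U_{ij}$-block. Choosing $u$ to be the marked vertex (or marked $P$-endpoint) directly before the block and $v$ the one directly after, a short case analysis shows that the set of internal vertices of the $u$-to-$v$ subpath equals $V_{ij}\setminus(\ma^\star(i,j)\cup\{u,v\})$, which is exactly what condition (iii) of Definition \ref{def:niceCell} demands. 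The main technical obstacle is the bookkeeping in the cycle setting and in the degenerate subcases (singleton visits, $P$-endpoints inside $(i,j)$, or $V^\star$ having no marked interior vertex), where one must carefully pick the witnesses $u,v$---possibly using the $u=v$ clause allowed for cycles; in each such subcase, the two observations above together with the clique property of $(i,j)$ leave enough flexibility to satisfy one of (i), (ii), (iii).
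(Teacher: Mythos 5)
Your high-level strategy—start from Lemma~\ref{lem:propertyMaMain} and then, cell by cell, consolidate the unmarked in-path vertices of each cell into a single visit by in-cell splicing (exploiting the clique property and the fact that consolidations in distinct cells act on disjoint vertex sets and never move a visit boundary)—is a sound and genuinely different way to reach the conclusion than the paper's proof, which instead picks $P$ to globally minimize the number of ``nice subpaths'' and derives a contradiction from a single local merge. Your approach is more constructive and iterative; the paper's extremal argument packs most of the case analysis into the choice of $P$ and one merge step. Either style would work in principle.

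However, there is a concrete gap in your base case. You claim that ``$V_{ij}=V(P)$'' dispatches to condition~(i) of Definition~\ref{def:niceCell}, but condition~(i) is $V(P)\subseteq f^{-1}(i,j)\setminus\ma^\star(i,j)$, which additionally requires that \emph{no} vertex of $P$ is marked. If $V(P)\subseteq f^{-1}(i,j)$ but $V(P)\cap\ma^\star(i,j)\neq\emptyset$, none of (i), (ii), (iii) need hold for the given $P$. For instance, take $P=x-m-y$ with $x,y$ unmarked, $m$ marked, all in cell $(i,j)$: (i) fails since $m\in\ma^\star(i,j)$; (ii) fails since $x\in V(P)\cap(f^{-1}(i,j)\setminus\ma^\star(i,j))$; and (iii) fails because the only candidate pairs $\{u,v\}\subseteq\{x,m,y\}$ give a $u$-$v$ subpath whose interior misses at least one of $x,y$. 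Since you explicitly condition your two ``observations'' (in particular the one guaranteeing that every visit has a marked boundary) on $V_{ij}\neq V(P)$, this case falls outside the rest of your argument too, so it really is unaddressed rather than silently absorbed. The fix is not hard—when $V(P)\subseteq f^{-1}(i,j)$, the clique property lets you re-order $P$ so the marked vertices form one block and the unmarked ones the complementary block (with $u=v$ used in the cycle case if there is only one marked vertex)—but it must be stated; the claim ``done by (i)'' is simply false. A further, smaller inaccuracy: you assert that ``every unmarked vertex of $U_{ij}$ sits strictly inside its visit with both of its $P$-neighbours also in $(i,j)$,'' which fails for unmarked $P$-endpoints (they have at most one $P$-neighbour and lie on a visit boundary); you do defer such vertices to the final ``short case analysis,'' but as written the sentence contradicts the degenerate cases you later list.
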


\begin{proof}
Given a path (resp.~cycle) $P$ with endpoints $x,y$ (resp.~no endpoints) and a cell $(i,j)\in[t]\times[t]$, we say that a subpath of $P$ is {\em $(i,j)$-nice} if it has distinct endpoints $u,v\in (V(P)\cap \ma^\star(i,j))\cup(\{x,y\}\cap f^{-1}(i,j))$ (resp.~$u,v\in V(P)\cap \ma^\star(i,j)$) and its set of internal vertices is a subset  $I$ of $f^{-1}(i,j)\setminus \ma^\star(i,j)$ such that if this subset $I$ is empty, then the subpath has an endpoint in $f^{-1}(i,j)\setminus \ma^\star(i,j)$ (which implies that $P$ is a path and $\{u,v\}\cap\{x,y\}\cap(f^{-1}(i,j)\setminus \ma^\star(i,j))\neq\emptyset$); we further say that a subpath of $P$ is {\em nice} if it is $(i,j)$-nice for some $(i,j)$. By Lemma \ref{lem:propertyMaMain}, $G$ has a path (resp.~cycle) on at least $k$ vertices with the following property: every edge $\{u,v\}$ of that path where $f(u)\neq f(v)$ is good. Among all such paths (resp.~cycles), let $P$ be one with minimum number of nice subpaths, and let $x,y$ be its endpoints (resp.~no endpoints). (Notice that if $x$ is unmarked, then because every edge $\{u,v\}$ of $P$ where $f(u)\neq f(v)$ is good, it must be that $x$ is an endpoint of a nice subpath. The same holds for $y$.) We next show that for every cell $(i,j)\in[t]\times[t]$, $P$ has at most one nice $(i,j)$-subpath. Because either $V(P)\subseteq f^{-1}(i,j)$ or every vertex in $(V(P)\cap f^{-1}(i,j))\setminus(\ma^\star(i,j)\cup\{x,y\})$ (resp.~$(V(P)\cap f^{-1}(i,j))\setminus\ma^\star(i,j)$) must be an internal vertex of a nice subpath (since every edge $\{u,v\}$ of $P$ where $f(u)\neq f(v)$ is good), this would imply that every cell $(i,j)\in[t]\times[t]$ is nice, which will complete the proof. Targeting a contradiction, suppose that $P$ yields some cell $(i,j)$ such that there exist two distinct subpaths $Q,Q'$ of $P$ that are $(i,j)$-nice (see Fig.~\ref{fig:proofof13}), that is, each of them has both endpoints in $\ma^\star(i,j)\cup(\{x,y\}\cap f^{-1}(i,j))$ (resp.~$\ma^\star(i,j)$) and the set of its internal vertices is a subset of $f^{-1}(i,j)\setminus \ma^\star(i,j)$ that is either non-empty or some endpoint belongs to $\{x,y\}
\cap (f^{-1}(i,j)\setminus \ma^\star(i,j))$.

Note that if $Q$ and $Q'$ intersect, then they intersect only at their endpoints. Define $\widehat{P}$ by removing from $P$ all the internal vertices of $Q'$ as well as its endpoint in $f^{-1}(i,j)\setminus\ma^\star(i,j)$ if such an endpoint exists (in which case $P$ is a path and this endpoint it is also an endpoint of $P$), and inserting them arbitrarily between the vertices of $Q$ (where multiple vertices can be inserted between two vertices); see Fig.~\ref{fig:proofof13}. By Property \ref{condition:GridClique1} in Definition \ref{def:GridClique}, we have that $\widehat{P}$ is also a path (resp.~cycle).
Clearly, $|V(\widehat{P})|=|V(P)|\geq k$, and it is also directly implied by the construction that $\widehat{P}$ also has the property that every edge $\{u,v\}\in E(\widehat{P})$ where $f(u)\neq f(v)$ is good (since we did not make any change with respect to the set of edges that cross different cells). Notice that each subpath that is nice with respect to $\widehat{P}$ is either the subpath obtained by merging $Q$ and $Q'$ or a subpath that also exists in $P$ and is therefore also a nice subpath with respect to $P$.  Therefore, $
\widehat{P}$ has one fewer nice subpath than $P$, which contradicts the minimality of $P$.
\end{proof}

\begin{figure}
\begin{center}
\begin{subfigure}[b]{0.24\textwidth}
\begin{tikzpicture}[scale=1]
\path [draw=green,snake it,thick] (0.5,1.5)--(-0.25,1.5);
\path [draw=green,snake it,thick] (0.5,0.5)--(-0.25,0.5);
\path [draw=green,snake it,thick] (2.5,1.5)--(3,2);
\path [draw=green,snake it,thick] (2.5,0.5)--(3,1);
\draw[green,thick] (0.5,1.5)--(0.5,2.5)--(2.5,2.5)--(2.5,1.5);
\draw[green,thick] (0.5,0.5)--(1.2,1.5)--(1.8,1.5)--(2.5,0.5);
\node[red] (a1) at (1.5,2.5) {$\bullet$}; 
\node[red] (a1) at (1.2,1.5) {$\bullet$}; 
\node[red] (a1) at (1.8,1.5) {$\bullet$}; 
\node[] (a1) at (0.5,0.5) {$\bullet$}; 
\node[] (a1) at (2.5,0.5) {$\bullet$}; 
\node[] (a1) at (1.5,0.5) {$\bullet$}; 
\node[red] (a1) at (0.5,2.5) {$\bullet$}; 
\node[red] (a1) at (2.5,2.5) {$\bullet$}; 
\node[] (a1) at (0.5,1.5) {$\bullet$}; 
\node[] (a1) at (2.5,1.5) {$\bullet$}; 
\node (a1) at (1.1,2.7) {$Q$}; 
\node (a1) at (1.3,1.2) {$Q'$}; 
\draw (0,0)--(0,3)--(3,3)--(3,0)--(0,0);
\end{tikzpicture}
\label{13a}
\end{subfigure}~
\begin{subfigure}[b]{0.24\textwidth}
\begin{tikzpicture}[scale=1]
\draw[dashed,thick] (0.5,1.5)--(-0.2,1.5);
\draw[dashed,thick] (0.5,0.5)--(-0.2,0.5);
\draw[dashed,thick] (2.5,1.5)--(3.25,1.8);
\draw[dashed,thick] (2.5,0.5)--(3.25,0.8);
\draw[dashed,thick] (0.5,1.5)--(1.8,1.5)--(0.5,2.5)--(2.5,2.5)--(2.5,1.5);
\draw[dashed,thick] (0.5,0.5)to [out=-30,in=-150] (2.5,0.5);
\node[red] (a1) at (1.5,2.5) {$\bullet$}; 
\node[red] (a1) at (1.2,1.5) {$\bullet$}; 
\node[red] (a1) at (1.8,1.5) {$\bullet$}; 
\node[] (a1) at (0.5,0.5) {$\bullet$}; 
\node[] (a1) at (2.5,0.5) {$\bullet$}; 
\node[] (a1) at (1.5,0.5) {$\bullet$}; 
\node[red] (a1) at (0.5,2.5) {$\bullet$}; 
\node[red] (a1) at (2.5,2.5) {$\bullet$}; 
\node[] (a1) at (0.5,1.5) {$\bullet$}; 
\node[] (a1) at (2.5,1.5) {$\bullet$}; 
\draw (0,0)--(0,3)--(3,3)--(3,0)--(0,0);
\end{tikzpicture}
\end{subfigure}~
\begin{subfigure}[b]{0.24\textwidth}
\begin{tikzpicture}[scale=1]
\path [draw=green,snake it,thick] (0.5,0.5)--(-0.25,0.5);
\path [draw=green,snake it,thick] (0.5,1.5)--(-0.25,1.5);
\draw[green,thick] (0.5,1.5)--(0.5,2.5)--(2.5,2.5);
\draw[green,thick] (0.5,0.5)--(1.2,1.5)--(1.8,1.5);
\node[] (a1) at (2.75,2.5) {$x$};
\node[] (a1) at (2,1.5) {$y$};
\node[red] (a1) at (1.5,2.5) {$\bullet$}; 
\node[red] (a1) at (1.2,1.5) {$\bullet$}; 
\node[red] (a1) at (1.8,1.5) {$\bullet$}; 
\node[] (a1) at (0.5,0.5) {$\bullet$}; 
\node[] (a1) at (2.5,0.5) {$\bullet$}; 
\node[] (a1) at (1.5,0.5) {$\bullet$}; 
\node[red] (a1) at (0.5,2.5) {$\bullet$}; 
\node[red] (a1) at (2.5,2.5) {$\bullet$}; 
\node[] (a1) at (0.5,1.5) {$\bullet$}; 
\node[] (a1) at (2.5,1.5) {$\bullet$}; 
\node (a1) at (1.1,2.7) {$Q$}; 
\node (a1) at (1.3,1.2) {$Q'$}; 
\draw (0,0)--(0,3)--(3,3)--(3,0)--(0,0);
\end{tikzpicture}
\label{13b}
\end{subfigure}~
\begin{subfigure}[b]{0.24\textwidth}
\begin{tikzpicture}[scale=1]
\draw[dashed,thick] (0.5,1.5)--(1.8,1.5)--(0.5,2.5)--(2.5,2.5);
\draw[dashed,thick] (0.5,0.5)--(-0.25,0.5);
\draw[dashed,thick] (0.5,1.5)--(-0.25,1.5);
\node[] (a1) at (2.75,2.5) {$x$};
\node[] (a1) at (2,1.5) {$y$};
\node[red] (a1) at (1.5,2.5) {$\bullet$}; 
\node[red] (a1) at (1.2,1.5) {$\bullet$}; 
\node[red] (a1) at (1.8,1.5) {$\bullet$}; 
\node[] (a1) at (0.5,0.5) {$\bullet$}; 
\node[] (a1) at (2.5,0.5) {$\bullet$}; 
\node[] (a1) at (1.5,0.5) {$\bullet$}; 
\node[red] (a1) at (0.5,2.5) {$\bullet$}; 
\node[red] (a1) at (2.5,2.5) {$\bullet$}; 
\node[] (a1) at (0.5,1.5) {$\bullet$}; 
\node[] (a1) at (2.5,1.5) {$\bullet$}; 
\draw (0,0)--(0,3)--(3,3)--(3,0)--(0,0);
\end{tikzpicture}
\end{subfigure}
\end{center}
\caption{The proof of Lemma~\ref{lem:propertyMaHelper}. Vertices colored black and red are the marked and unmarked vertices in the cell, respectively. In the first figure the union of internal vertices of $Q$ and $Q'$ is the set of unmarked vertices in the cell, and the second figure depicts how to reroute to make the cell nice. The third figure illustrate the case when both the endpoints $x$ and $y$ of the path $P$ are in the cell,  and the fourth figure depicts how to reroute to make the cell nice.}
\label{fig:proofof13}
\end{figure}
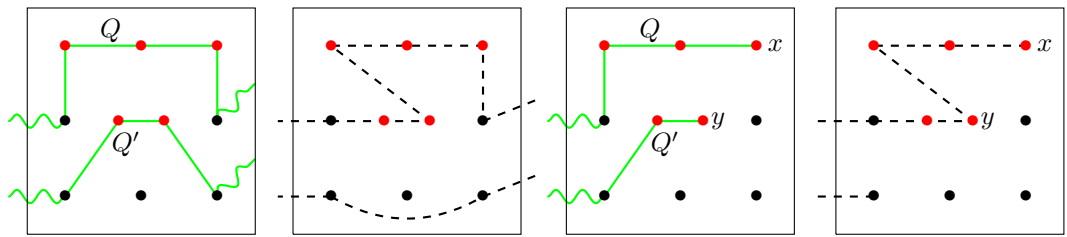

We now prove the main lemma of this section.

\begin{lemma}\label{lem:propertyMa}
Let $G$ be a clique-grid with representation $f$ that has a path (resp.~cycle) on at least $k$ vertices. Then, $G$ also has a path (resp.~cycle) $P$ on at least $k$ vertices with the following property:  every cell $(i,j)\in[t]\times[t]$ is good.
\end{lemma}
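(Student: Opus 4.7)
The plan is to bootstrap from Lemma~\ref{lem:propertyMaHelper}, which already yields a path/cycle $P$ on at least $k$ vertices such that every cell is \emph{nice}, and then ``saturate'' each cell that falls under option (iii) of niceness by inserting the currently unvisited unmarked vertices of that cell, using the fact that each cell is a clique (Property \ref{condition:GridClique1} of Definition \ref{def:GridClique}). This upgrade will turn option (iii) of niceness into option (iii) of goodness without creating any new edge that crosses cell boundaries, so no cell that was previously good can be harmed.

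First I would dispose of the degenerate situation where some cell $(i_0,j_0)$ satisfies option (i) of niceness for $P$, i.e.\ $V(P)\subseteq f^{-1}(i_0,j_0)\setminus \ma^\star(i_0,j_0)$. In this case $|f^{-1}(i_0,j_0)\setminus \ma^\star(i_0,j_0)|\geq |V(P)|\geq k$, and since $G[f^{-1}(i_0,j_0)]$ is a clique, I can replace $P$ by a Hamiltonian path (resp.\ cycle) $P^\star$ on all of $f^{-1}(i_0,j_0)\setminus \ma^\star(i_0,j_0)$. Then cell $(i_0,j_0)$ satisfies option (i) of goodness for $P^\star$, and every other cell trivially satisfies option (ii).

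For the main case, assume no cell satisfies option (i) of niceness, so every cell satisfies (ii) or (iii). I construct $P'$ cell by cell: for each cell $(i,j)$ where option (iii) of niceness holds, let $u_{ij},v_{ij}$ be the endpoints of the corresponding subpath $Q_{ij}$ of $P$ (whose internal vertices are exactly $(V(P)\cap f^{-1}(i,j))\setminus(\ma^\star(i,j)\cup\{u_{ij},v_{ij}\})$), and let $U_{ij}=f^{-1}(i,j)\setminus(\ma^\star(i,j)\cup V(P)\cup\{u_{ij},v_{ij}\})$ be the unvisited unmarked vertices of the cell. Since $\{u_{ij},v_{ij}\}\cup U_{ij}\cup V(Q_{ij})\subseteq f^{-1}(i,j)$ induces a clique, I replace $Q_{ij}$ by any Hamiltonian path from $u_{ij}$ to $v_{ij}$ on $V(Q_{ij})\cup U_{ij}$; call the resulting object $P'$.

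The verification is straightforward. All modifications take place strictly inside individual cells, so $P'$ is still a path (resp.\ cycle) in $G$, and it inherits $V(P)\subseteq V(P')$ hence $|V(P')|\geq k$. For each cell $(i,j)$ that was modified, the internal vertices of the new subpath between $u_{ij}$ and $v_{ij}$ are exactly $f^{-1}(i,j)\setminus(\ma^\star(i,j)\cup\{u_{ij},v_{ij}\})$, which is precisely option (iii) of goodness. For each cell $(i,j)$ that satisfied option (ii) of niceness, no vertex of the cell was visited by $P$ and none was added (insertions for cell $(i',j')$ use only vertices of $(i',j')$), so option (ii) of goodness continues to hold. The only mildly delicate point I expect is the bookkeeping when $u_{ij}$ or $v_{ij}$ equals an endpoint $x$ or $y$ of $P$; there the subpath $Q_{ij}$ ends at $x$ (or $y$), so I must insert the new vertices strictly between $u_{ij}$ and $v_{ij}$ to preserve the endpoint structure, which is still possible inside the clique.
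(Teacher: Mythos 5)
Your proof is correct and uses the same core rerouting ideas as the paper: the degenerate case where $V(P)$ sits entirely inside one cell is handled by replacing $P$ with a Hamiltonian path/cycle on that cell's unmarked vertices, and otherwise the unvisited unmarked vertices of each cell are absorbed into the nice subpath of that cell using Property~\ref{condition:GridClique1}. The only structural difference is bookkeeping: the paper takes, among all nice $P$, one that minimizes the number of bad cells and shows a single bad cell can be fixed (an extremal argument), whereas you fix all cells in one simultaneous pass. Your one-shot construction is sound because the subpaths $Q_{ij}$ for distinct cells are internally vertex-disjoint (their internal vertices lie in disjoint sets $f^{-1}(i,j)\setminus\ma^\star(i,j)$) and their endpoints---marked vertices of the respective cell, or endpoints of $P$ lying in that cell---are likewise cell-specific, so the replacements do not interfere; and the sets $U_{ij}\subseteq f^{-1}(i,j)$ you splice in are pairwise disjoint and disjoint from $V(P)$. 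One small imprecision worth tightening: you should perform the splice only for cells where $(V(P)\cap f^{-1}(i,j))\setminus\ma^\star(i,j)\neq\emptyset$ (equivalently, where option~(ii) of niceness fails), since a cell satisfying option~(ii) is already good and need not---and in your phrasing might not have a canonical subpath to---be touched. As you note, even if one did splice into such a cell via an arbitrary $P$-edge between two marked cell vertices, the result would still be good via option~(iii); but restricting to cells violating~(ii) avoids the question and is cleaner.
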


\begin{proof}
By Lemma \ref{lem:propertyMaHelper}, $G$ has a path (resp.~cycle) $P$ on at least $k$ vertices with the following property:  every cell $(i,j)\in[t]\times[t]$ is nice. Among all such paths (resp.~cycles), let $P$ be one with minimum number of bad cells.  Next, we show that $P$ yields no bad cell, which will complete the proof. Targeting a contradiction, suppose that $P$ yields some bad cell $(i,j)\in[t]\times[t]$. Because this cell is not good, $(V(P)\cap f^{-1}(i,j))\setminus\ma^\star(i,j)\neq\emptyset$. Further, because $(i,j)$ is nice, either $V(P)\subseteq f^{-1}(i,j)\setminus 
\ma^\star(i,j)$ or there exist distinct $u,v\in (V(P)\cap \ma^\star(i,j))\cup(\{x,y\})\cap f^{-1}(i,j))$ (resp.~$u,v\in V(P)\cap \ma^\star(i,j)$) such that the set of internal vertices of the (resp.~a)  subpath $Q$ of $P$ between $u$ and $v$ is precisely $(V(P)\cap f^{-1}(i,j))\setminus(\ma^\star(i,j)\cup\{u,v\})$ (see Fig.~\ref{fig:proofof14}).  In the first case, notice that since $f^{-1}(i,j)\setminus\ma^\star(i,j)$ induces a clique (by Property \ref{condition:GridClique1} in Definition \ref{def:GridClique}) and its size is at least $k$ (because $|V(P)|\geq k$), it is clear that $G$ contains a path (resp.~cycle) whose vertex set is $f^{-1}(i,j)\setminus\ma^\star(i,j)$ and which has at least $k$ vertices, for which every cell is trivially good. Thus, we next suppose that only the second case happens.

 Notice that $Q$ must contain a vertex from  $\ma^\star(i,j)$ as an endpoint, because its endpoints $u,v\in (V(P)\cap \ma^\star(i,j))\cup(\{x,y\})\cap f^{-1}(i,j))$ (resp.~$u,v\in V(P)\cap \ma^\star(i,j)$) and it is not possible that $\{u,v\}=\{x,y\}$ (since then the first case happens). Because also $(V(P)\cap f^{-1}(i,j))\setminus\ma^\star(i,j)\neq\emptyset$, we know that $Q$ contains one edge $\{a,b\}$ with both endpoints from $f^{-1}(i,j)$. Then, we derive $\widehat{P}$ from $P$ by inserting all the vertices in $(f^{-1}(i,j)\setminus \ma^\star(i,j))\setminus V(P)$ between $a$ and $b$ in some arbitrary order (see Fig.~\ref{fig:proofof14}). By Property \ref{condition:GridClique1} in Definition \ref{def:GridClique}, we still have a path (resp.~cycle). Further, notice that $(i,j)$ is a good cell with respect to $\widehat{P}$. As the adjacencies of all vertices outside the cell $(i,j)$ are the same in $P$ and $\widehat{P}$, we have that $\widehat{P}$ has only nice cells (because $P$ has this property), and that every cell that is bad with respect to $\widehat{P}$ is also bad with respect to $P$. Thus, we obtain a path (resp.~cycle) on at least $k$ vertices with fewer bad cells than $P$ and still with the property every cell $(i,j)\in[t]\times[t]$ is nice. This is a contradiction to the choice of $P$, and therefore the proof is complete.
\end{proof}

\begin{figure}
\begin{center}
\begin{subfigure}[b]{0.24\textwidth}
\begin{tikzpicture}[scale=1]
\draw[green,thick] (1.8,1.5)--(0.5,2.5)--(2.5,2.5);
\node[] (a1) at (2.75,2.5) {$x$};
\node[] (a1) at (2,1.5) {$y$};
\node[red] (a1) at (1.5,2.5) {$\bullet$}; 
\node[red] (a1) at (1.2,1.5) {$\bullet$}; 
\node[red] (a1) at (1.8,1.5) {$\bullet$}; 
\node[] (a1) at (0.5,0.5) {$\bullet$}; 
\node[] (a1) at (2.5,0.5) {$\bullet$}; 
\node[] (a1) at (1.5,0.5) {$\bullet$}; 
\node[red] (a1) at (0.5,2.5) {$\bullet$}; 
\node[red] (a1) at (2.5,2.5) {$\bullet$}; 
\node[] (a1) at (0.5,1.5) {$\bullet$}; 
\node[] (a1) at (2.5,1.5) {$\bullet$}; 
\node (a1) at (1.1,2.7) {$P$}; 
\draw (0,0)--(0,3)--(3,3)--(3,0)--(0,0);
\end{tikzpicture}
\end{subfigure}
\begin{subfigure}[b]{0.24\textwidth}
\begin{tikzpicture}[scale=1]
\draw[dashed,thick] (1.8,1.5)--(1.2,1.5)--(0.5,2.5)--(2.5,2.5);
\node[] (a1) at (2.75,2.5) {$x$};
\node[] (a1) at (2,1.5) {$y$};
\node[red] (a1) at (1.5,2.5) {$\bullet$}; 
\node[red] (a1) at (1.2,1.5) {$\bullet$}; 
\node[red] (a1) at (1.8,1.5) {$\bullet$}; 
\node[] (a1) at (0.5,0.5) {$\bullet$}; 
\node[] (a1) at (2.5,0.5) {$\bullet$}; 
\node[] (a1) at (1.5,0.5) {$\bullet$}; 
\node[red] (a1) at (0.5,2.5) {$\bullet$}; 
\node[red] (a1) at (2.5,2.5) {$\bullet$}; 
\node[] (a1) at (0.5,1.5) {$\bullet$}; 
\node[] (a1) at (2.5,1.5) {$\bullet$}; 
\draw (0,0)--(0,3)--(3,3)--(3,0)--(0,0);
\end{tikzpicture}
\end{subfigure}
\begin{subfigure}[b]{0.24\textwidth}
\begin{tikzpicture}[scale=1]
\draw[green,thick] (0.5,1.5)--(0.5,2.5)--(2.5,2.5)--(2.5,1.5);
\path [draw=green,snake it,thick] (0.5,1.5)--(-0.25,1.2);
\path [draw=green,snake it,thick] (2.5,1.5)--(3.25,1.8);
\node[red] (a1) at (1.5,2.5) {$\bullet$}; 
\node[red] (a1) at (1.2,1.5) {$\bullet$}; 
\node[red] (a1) at (1.8,1.5) {$\bullet$}; 
\node[] (a1) at (0.5,0.5) {$\bullet$}; 
\node[] (a1) at (2.5,0.5) {$\bullet$}; 
\node[] (a1) at (1.5,0.5) {$\bullet$}; 
\node[red] (a1) at (0.5,2.5) {$\bullet$}; 
\node[red] (a1) at (2.5,2.5) {$\bullet$}; 
\node[] (a1) at (0.5,1.5) {$\bullet$}; 
\node[] (a1) at (2.5,1.5) {$\bullet$}; 
\node (a1) at (1.1,2.7) {$Q$}; 
\draw (0,0)--(0,3)--(3,3)--(3,0)--(0,0);
\end{tikzpicture}
\end{subfigure}
\begin{subfigure}[b]{0.24\textwidth}
\begin{tikzpicture}[scale=1]
\draw[dashed,thick] (0.5,1.5)--(0.5,2.5)--(1.2,1.5)--(1.8,1.5)--(1.5,2.5)--(2.5,2.5)--(2.5,1.5);
\draw[dashed,thick] (0.5,1.5)--(-0.25,1.2);
\draw[dashed,thick] (2.5,1.5)--(3.25,1.8);
\node[red] (a1) at (1.5,2.5) {$\bullet$}; 
\node[red] (a1) at (1.2,1.5) {$\bullet$}; 
\node[red] (a1) at (1.8,1.5) {$\bullet$}; 
\node[] (a1) at (0.5,0.5) {$\bullet$}; 
\node[] (a1) at (2.5,0.5) {$\bullet$}; 
\node[] (a1) at (1.5,0.5) {$\bullet$}; 
\node[red] (a1) at (0.5,2.5) {$\bullet$}; 
\node[red] (a1) at (2.5,2.5) {$\bullet$}; 
\node[] (a1) at (0.5,1.5) {$\bullet$}; 
\node[] (a1) at (2.5,1.5) {$\bullet$}; 
\draw (0,0)--(0,3)--(3,3)--(3,0)--(0,0);
\end{tikzpicture}
\end{subfigure}
\end{center}
\caption{The proof of Lemma~\ref{lem:propertyMa}. Vertices colored black and red are marked and unmarked vertices, respectively. In the first figure $V(P)\subseteq f^{-1}(i,j)\setminus 
\ma^\star(i,j)$, and the second figure illustrates that there is a path of length at least $\vert V(P)\vert$ whose vertex set is the set of unmarked vertices in the cell. The third figure illustrates the case where $P$  is not fully contained  the cell,  and the fourth figure depicts a possibility to reroute it to make the cell good.}
\label{fig:proofof14}
\end{figure}
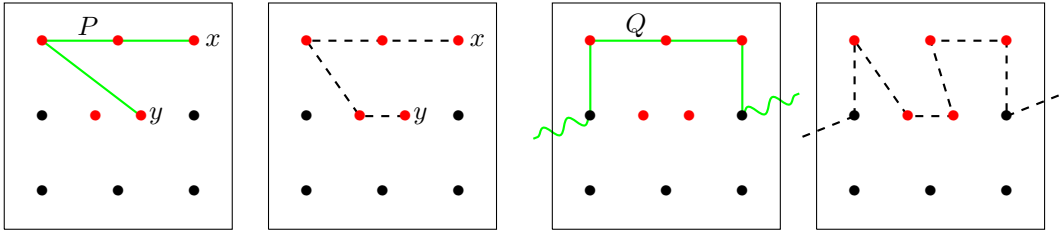

\section{The Algorithm}\label{sec:alg}

Our algorithm is based on a reduction of {\sc Long Path} (resp.~{\sc Long Cycle}) on unit disk graphs to the weighted version of the problem, called {\sc Weighted Long Path} (resp.~{\sc Weighted Long Cycle}), on unit disk graphs of treewidth $\OO(\sqrt{k})$. In {\sc Weighted Long Path} (resp.~{\sc Weighted Long Cycle}), we are given a graph $G$ with a weight function $w: V(G)\rightarrow\mathbb{N}$ and an integer $k\in\mathbb{N}$, and the objective is to determine whether $G$ has a path (resp.~cycle) whose weight, defined as the sum of the weights of its vertices, is at least $k$.

The following proposition will be immediately used in our algorithm.

\begin{proposition}[\cite{DBLP:journals/iandc/BodlaenderCKN15,DBLP:journals/jacm/FominLPS16}]\label{prop:weightedPath}
{\sc Weighted Long Path} and {\sc Weighted Long Cycle} are solvable in time $2^{\OO(\tw)}n$ where $\tw$ is the treewidth of the input graph.
\end{proposition}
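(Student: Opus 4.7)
My plan is to run a dynamic programming algorithm over a nice tree decomposition of $G$ and to use the rank-based / representative-families machinery of \cite{DBLP:journals/iandc/BodlaenderCKN15,DBLP:journals/jacm/FominLPS16} to control the blowup coming from tracking matchings on bags. First, I would invoke Proposition~\ref{prop:treewidth} to compute a tree decomposition of width $O(\tw)$ in time $2^{O(\tw)}n$, and then convert it to a nice tree decomposition in linear time. For each node $x$ of the decomposition, let $G_x$ denote the subgraph induced by the vertices appearing in bags at or below $x$. A \emph{partial solution} at $x$ is a subgraph $H\subseteq G_x$ that is a vertex-disjoint union of paths, such that every vertex of $H$ lying outside $\beta(x)$ which is a path-endpoint is, in fact, not an endpoint (so only bag vertices may serve as endpoints). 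Its state is a pair $(U,M)$, where $U\subseteq\beta(x)$ records which bag vertices of $H$ are present (together with a Boolean flag per vertex marking it as an endpoint or an internal vertex), and $M$ is a perfect matching on the endpoint subset of $U$ indicating which pairs of endpoints are joined by a common path of $H$. The DP table stores, for every state $(U,M)$, the maximum total vertex weight $w^\star(U,M)$ attained by any such $H$.

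Transitions at introduce, forget, and join nodes of the nice decomposition are the standard ones for connectivity DPs, and maintaining maxima is compatible in every case: at an introduce-vertex node we optionally add the new vertex (paying its weight), at a forget-vertex node we discard states in which the forgotten vertex is still an endpoint, and at a join node we combine states from the two children by summing weights over pairs whose endpoint matchings compose into a new valid matching. The answer for (weighted) \probKPath is read off at the root by checking whether some state with at most one matched pair has $w^\star\ge k$; for (weighted) \probKCycle we extend the state by a single extra bit indicating whether a completed cycle has already been formed below~$x$, with the transitions updated accordingly, and read off at the root those states whose bit is set.

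The main obstacle is that the naive state space contains $\tw^{\Theta(\tw)}$ matchings per bag, which would yield only a $\tw^{O(\tw)}n$-time algorithm. This is exactly the bottleneck resolved by the rank-based framework of \cite{DBLP:journals/iandc/BodlaenderCKN15}: for every fixed~$U$, the table $M\mapsto w^\star(U,M)$ can be replaced by a \emph{representative} subcollection of only $2^{O(\tw)}$ matchings (with their stored weights) such that, for every possible extension outside $G_x$, the maximum weight attainable using \emph{some} discarded state is also attainable using one of the retained states. The reduction is carried out by Gaussian elimination on a $\mathrm{GF}(2)$-valued matrix indexed by matchings, and the switch from Boolean to weighted semantics changes nothing in this step, because the discard rule is guided purely by linear dependence of matching-vectors; when breaking ties we simply keep the representative of largest stored weight. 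Since the representative subcollection is computable in $2^{O(\tw)}$ time per bag, the overall running time is $2^{O(\tw)}n$, as claimed.
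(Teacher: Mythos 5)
The paper does not prove this proposition; it is imported as a black box from the cited references, so there is no in-paper argument to compare against. Judged on its own, your sketch identifies the right skeleton --- DP over a nice tree decomposition, with states $(U,M)$ recording used bag vertices, endpoint/internal flags, and a pairing of endpoints, and with the rank-based reduction of Bodlaender--Cygan--Kratsch--Nederlof applied per bag to shrink the $\tw^{\Theta(\tw)}$ pairings to $2^{\OO(\tw)}$ representatives. That is the correct route. However, two steps as you describe them would actually fail.

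The phrase ``when breaking ties we simply keep the representative of largest stored weight'' mischaracterizes the weighted reduce and, taken literally, would lose optimal solutions. The correct procedure sorts the current pairings in \emph{decreasing} order of stored weight and then greedily retains a GF(2) row-basis of the matchings-connectivity matrix, keeping a pairing only if its row is independent of those already kept. Correctness then rests on the following: if $X$ is discarded and $Y$ is any completion with a nonzero entry against $X$, then $X$'s row is a combination of earlier (hence at least as heavy) retained rows, and over GF(2) at least one of them also pairs nonzero with $Y$. A weight-oblivious basis with weight used only to break ties can retain a light row and discard a heavy dependent one, which is precisely the failure mode the ordering is there to prevent.

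Your partial-solution definition forbids path endpoints outside $\beta(x)$, and your Long Path readout ``some state with at most one matched pair has $w^\star\ge k$'' is vacuous at a root whose bag is empty: the only surviving state there is the empty one. You must allow up to two endpoints of the eventual path to be forgotten --- for instance by carrying a counter in $\{0,1,2\}$ of forgotten degree-one vertices (incremented at forget nodes, rejecting states that would exceed two), by adding two weight-zero universal dummy vertices placed in every bag and solving for a path between them, or by guessing the two terminals in an outer loop. Your Long Cycle variant does not suffer from this because the ``cycle already formed'' bit plays exactly this role of letting the finished object leave the bag. (A smaller omission, easily fixed: at join nodes the weights of vertices in the shared bag are counted in both children and must be subtracted once.)
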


\subparagraph*{Algorithm Specification.} We call our algorithm \alg.  Given an instance $(G,k)$ of {\sc Long Path} (resp.~{\sc Long Cycle}) on unit disk graphs, it works as follows.
\begin{enumerate}
\item\label{alg1} Use Proposition \ref{prop:cliqueGrid} to obtain a representation $f: V(G)\rightarrow [t]\times[t]$ of $G$.

\item\label{alg2} Use Observation \ref{obs:sizeMa} to compute $\ma^\star(i,j)$ for every cell $(i,j)\in[t]\times[t]$. 
Let $\ma^\star=\bigcup_{(i,j)\in [t]\times [t]}\ma^\star(i,j)$. 

\item\label{alg3} Let $G'$ be the graph  defined as follows (see Fig.~\ref{fig:alggraphs}). For any cell $(i,j)\in[t]\times[t]$, let $c_{(i,j)}$ denote a vertex in $f^{-1}(i,j)\setminus \ma^\star(i,j)$ (chosen arbitrarily), where if no such vertex exists, let $c_{(i,j)}=\nil$. Then, $V(G')=\ma^\star\cup(\{c_{(i,j)}: (i,j)\in[t]\times[t]\}\setminus\{\nil\})$ and $E(G')=E(G[V(G')])$. Because $G'$ is an induced subgraph of $G$, it is a unit disk graph.

\item\label{alg4} Define $w:V(G')\rightarrow\mathbb{N}$ as follows. For every $v\in V(G')$, if $v=c_{(i,j)}$ for some $(i,j)\in[t]\times[t]$ then $w(v)=|f^{-1}(i,j)\setminus \ma^\star(i,j)|$, and otherwise $w(v)=1$.

\item\label{alg6b} Let $G^\star$ be the graph defined as follows (see Fig.~\ref{fig:alggraphs}): $V(G^\star)=V(G')$ and $E(G^\star)=E(G')\setminus\{\{c_{(i,j)},v\}\in E(G'): (i,j)\in[t]\times[t], v\notin f^{-1}(i,j)\}$.

\item\label{alg5} Let $\Delta$ be the maximum degree of $G^{\star}$. Use Proposition \ref{prop:treewidth} to decide either $\tw(G^{\star})>100\Delta^3\sqrt{2k}$ or $\tw(G^{\star})\leq 500\Delta^3\sqrt{2k}$.

\item\label{alg6} If it was decided that $\tw(G^{\star})>100\Delta^3\sqrt{2k}$, then return \yes\ and terminate.

\item\label{alg7} Use Proposition \ref{prop:weightedPath} to determine whether $(G^\star,w,k)$ is a \yes-instance of {\sc Weighted Long Path} (resp.~{\sc Weighted Long Cycle}). If the answer is positive, then return \yes, and otherwise return \no.
\end{enumerate}

\begin{figure}
\begin{center}
\begin{subfigure}[b]{0.49\textwidth}
\begin{tikzpicture}[scale=1]
\draw (0,1)--(6,1);
\draw (0,2)--(6,2);
\draw (0,3)--(6,3);
\draw (0,4)--(6,4);
\draw (0,5)--(6,5);
\draw (0,1)--(0,5);
\draw (1,1)--(1,5);
\draw (2,1)--(2,5);
\draw (3,1)--(3,5);
\draw (4,1)--(4,5);
\draw (5,1)--(5,5);
\draw (6,1)--(6,5);
\draw (2.8,2.8)--(2.7,2.3)--(2.2,2.85)--(2.1,2.1)--(2.8,2.8)--(2.2,2.85);
\draw (2.1,2.1)--(2.7,2.3);
\node[] (a1) at (2.8,2.8) {$\bullet$}; 
\node[] (a1) at (2.7,2.3) {$\bullet$}; 
\node[] (a1) at (2.2,2.85) {$\bullet$}; 
\node[] (a1) at (2.1,2.1) {$\bullet$}; 
\draw[blue] (1.7,2.3)--(2.2,2.85)--(1.2,2.85);
\draw[blue]  (1.7,2.3)--(2.1,2.1)to [out=90,in=-20] (1.2,2.85);
\draw[blue]  (1.2,2.1)--(2.1,2.1)--(0.9,1.8);
\draw (1.2,2.1)--(1.7,2.3)--(1.2,2.85)--(1.2,2.1);
\node[] (a1) at (1.2,2.1) {$\bullet$}; 
\node[] (a1) at (1.7,2.3) {$\bullet$}; 
\node[] (a1) at (1.2,2.85) {$\bullet$}; 
\draw[blue] (1.2,2.85)--(0.9,1.8);
\draw[blue]  (0.1,1.85)--(1.2,2.1)--(0.9,1.8);
\draw[blue]  (1.2,2.1)--(0.8,1.1);
\draw (0.8,1.1) --(0.9,1.8) -- (0.1,1.85) -- (0.8,1.1);
\node[] (a1) at (0.8,1.1) {$\bullet$}; 
\node[] (a1) at (0.9,1.8) {$\bullet$}; 
\node[] (a1) at (0.1,1.85) {$\bullet$}; 
\draw (3.8,3.1)--(3.7,3.7);
\node[] (a1) at (3.8,3.1) {$\bullet$}; 
\node[] (a1) at (3.7,3.7) {$\bullet$}; 
\draw[blue] (2.8,2.8)--(3.8,3.1)--(2.7,2.3);
\draw[blue] (2.8,2.8)--(3.7,3.7)--(2.7,2.3);
\draw (4.7,2.1) --(4.4,2.8)--(4.1,2.85)--(4.7,2.1);
\node[] (a1) at (4.7,2.1) {$\bullet$}; 
\node[] (a1) at (4.4,2.8) {$\bullet$}; 
\node[] (a1) at (4.1,2.85) {$\bullet$}; 
\draw[blue] (3.8,3.1)--(4.1,2.85) -- (3.7,3.7);
\draw[blue] (3.8,3.1)--(4.4,2.8)-- (3.7,3.7);
\draw (5.7,4.9) -- (5.1,4.1);
\node[] (a1) at (5.85,4.8) {$y$}; 
\node[red] (a1) at (5.7,4.9) {$\bullet$}; 
\node[] (a1) at (5.1,4.1) {$\bullet$}; 
\draw[blue] (5.1,4.1)--(4.4,2.8);
\draw[blue] (5.1,4.1)--(3.7,3.7);
\draw[blue] (4.7,2.1)-- (5.1,1.5);
\draw[blue] (4.7,2.1)-- (5.3,1.7);
\draw[blue] (4.7,2.1)-- (5.6,1.7);
\draw[blue] (4.7,2.1)-- (5.8,1.5);
\draw[blue] (4.7,2.1)-- (5.6,1.2);
\draw[red] (4.7,2.1)-- (5.2,1.1);
\draw (5.1,1.5) --(5.3,1.7)--(5.6,1.7)--(5.8,1.5)--(5.6,1.2)--(5.2,1.1)-- (5.1,1.5)--(5.6,1.7)--(5.6,1.2)-- (5.1,1.5)--(5.8,1.5)--(5.2,1.1)--(5.3,1.7)--(5.8,1.5);
\draw (5.3,1.7)--(5.6,1.2);
\draw (5.2,1.1)--(5.6,1.7);
\draw[blue](5.9,2.8)--(5.6,1.7);
\node[] (a1) at (5.1,1.5) {$\bullet$}; 
\node[] (a1) at (5.3,1.7) {$\bullet$}; 
\node[] (a1) at (5.6,1.7) {$\bullet$}; 
\node[] (a1) at (5.8,1.5) {$\bullet$}; 
\node[] (a1) at (5.6,1.2) {$\bullet$}; 
\node[red] (a1) at (5.2,1.1) {$\bullet$}; 
\node[] (a1) at (5.35,1) {$x$}; 
\node[] (a1) at (5.9,2.8) {$\bullet$}; 
\end{tikzpicture}
\end{subfigure}
\begin{subfigure}[b]{0.49\textwidth}
\begin{tikzpicture}[scale=1]
\draw (0,1)--(6,1);
\draw (0,2)--(6,2);
\draw (0,3)--(6,3);
\draw (0,4)--(6,4);
\draw (0,5)--(6,5);
\draw (0,1)--(0,5);
\draw (1,1)--(1,5);
\draw (2,1)--(2,5);
\draw (3,1)--(3,5);
\draw (4,1)--(4,5);
\draw (5,1)--(5,5);
\draw (6,1)--(6,5);
\draw (2.8,2.8)--(2.7,2.3)--(2.2,2.85)--(2.1,2.1)--(2.8,2.8)--(2.2,2.85);
\draw (2.1,2.1)--(2.7,2.3);
\node[] (a1) at (2.8,2.8) {$\bullet$}; 
\node[] (a1) at (2.7,2.3) {$\bullet$}; 
\node[] (a1) at (2.2,2.85) {$\bullet$}; 
\node[] (a1) at (2.1,2.1) {$\bullet$}; 
\draw[blue] (1.7,2.3)--(2.2,2.85)--(1.2,2.85);
\draw[blue]  (1.7,2.3)--(2.1,2.1)to [out=90,in=-20] (1.2,2.85);
\draw[blue]  (1.2,2.1)--(2.1,2.1)--(0.9,1.8);
\draw (1.2,2.1)--(1.7,2.3)--(1.2,2.85)--(1.2,2.1);
\node[] (a1) at (1.2,2.1) {$\bullet$}; 
\node[] (a1) at (1.7,2.3) {$\bullet$}; 
\node[] (a1) at (1.2,2.85) {$\bullet$}; 
\draw[blue] (1.2,2.85)--(0.9,1.8);
\draw[blue]  (0.1,1.85)--(1.2,2.1)--(0.9,1.8);
\draw[blue]  (1.2,2.1)--(0.8,1.1);
\draw (0.8,1.1) --(0.9,1.8) -- (0.1,1.85) -- (0.8,1.1);
\node[] (a1) at (0.8,1.1) {$\bullet$}; 
\node[] (a1) at (0.9,1.8) {$\bullet$}; 
\node[] (a1) at (0.1,1.85) {$\bullet$}; 
\draw (3.8,3.1)--(3.7,3.7);
\node[] (a1) at (3.8,3.1) {$\bullet$}; 
\node[] (a1) at (3.7,3.7) {$\bullet$}; 
\draw[blue] (2.8,2.8)--(3.8,3.1)--(2.7,2.3);
\draw[blue] (2.8,2.8)--(3.7,3.7)--(2.7,2.3);
\draw (4.7,2.1) --(4.4,2.8)--(4.1,2.85)--(4.7,2.1);
\node[] (a1) at (4.7,2.1) {$\bullet$}; 
\node[] (a1) at (4.4,2.8) {$\bullet$}; 
\node[] (a1) at (4.1,2.85) {$\bullet$}; 
\draw[blue] (3.8,3.1)--(4.1,2.85) -- (3.7,3.7);
\draw[blue] (3.8,3.1)--(4.4,2.8)-- (3.7,3.7);
\draw (5.7,4.9) -- (5.1,4.1);
\node[red] (a1) at (5.7,4.9) {$\bullet$}; 
\node[] (a1) at (5.1,4.1) {$\bullet$}; 
\draw[blue] (5.1,4.1)--(4.4,2.8);
\draw[blue] (5.1,4.1)--(3.7,3.7);
\draw[blue] (4.7,2.1)-- (5.1,1.5);
\draw[blue] (4.7,2.1)-- (5.3,1.7);
\draw[blue] (4.7,2.1)-- (5.6,1.7);
\draw[blue] (4.7,2.1)-- (5.8,1.5);
\draw[blue] (4.7,2.1)-- (5.6,1.2);
\draw (5.1,1.5) --(5.3,1.7)--(5.6,1.7)--(5.8,1.5)--(5.6,1.2)--(5.2,1.1)-- (5.1,1.5)--(5.6,1.7)--(5.6,1.2)-- (5.1,1.5)--(5.8,1.5)--(5.2,1.1)--(5.3,1.7)--(5.8,1.5);
\draw (5.3,1.7)--(5.6,1.2);
\draw (5.2,1.1)--(5.6,1.7);
\draw[blue](5.9,2.8)--(5.6,1.7);
\node[] (a1) at (5.1,1.5) {$\bullet$}; 
\node[] (a1) at (5.3,1.7) {$\bullet$}; 
\node[] (a1) at (5.6,1.7) {$\bullet$}; 
\node[] (a1) at (5.8,1.5) {$\bullet$}; 
\node[] (a1) at (5.6,1.2) {$\bullet$}; 
\node[red] (a1) at (5.2,1.1) {$\bullet$}; 
\node[] (a1) at (5.85,4.8) {$y$}; 
\node[] (a1) at (5.35,1) {$x$}; 
\node[] (a1) at (5.9,2.8) {$\bullet$}; 
\end{tikzpicture}
\end{subfigure}
\end{center}
\caption{The graphs $G'$ and $G^{\star}$ constructed from the graph $G$ in Figure~\ref{fig:cliquegridgraph} are depicted on the left side and right side figures, respectively. Here, $w(x)=2$, $w(y)=3$, and for all $z\in V(G')\setminus \{x,y\}$, $w(z)=1$.}
\label{fig:alggraphs}
\end{figure}
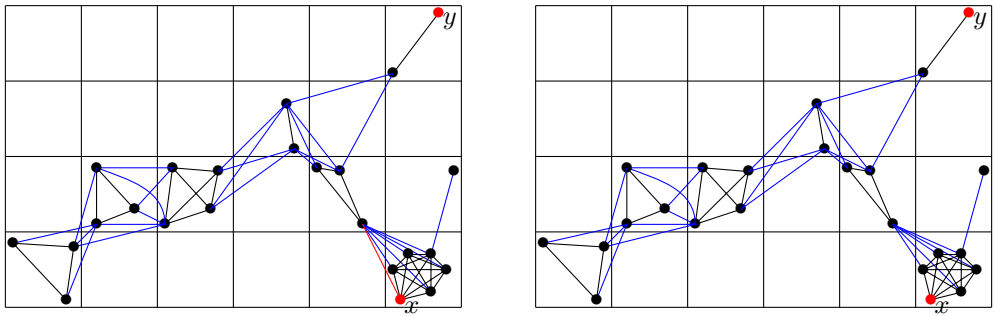

\subparagraph*{Analysis.} We first analyze the running time of the algorithm.

\begin{lemma}\label{lem:runtime}
The time complexity of \alg\ is upper bounded by $2^{\OO(\sqrt{k})}(n+m)$.
\end{lemma}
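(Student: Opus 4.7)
The plan is to analyze the running time of each of the eight steps of $\alg$ separately, confirming that each runs in $2^{\OO(\sqrt{k})}(n+m)$ time. Steps \ref{alg1}--\ref{alg4} and \ref{alg6b} are immediate: Proposition~\ref{prop:cliqueGrid} gives the representation $f$ in linear time, Observation~\ref{obs:sizeMa} yields all the sets $\ma^\star(i,j)$ in linear time, and the constructions of $G'$, $w$, and $G^\star$ amount to simple passes over the marked vertices and the chosen representatives $c_{(i,j)}$, together with a sweep over $E(G)$ to keep only those edges with both endpoints in $V(G')$ (and, for $G^\star$, to drop those incident to some $c_{(i,j)}$ that leave its cell). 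Thus the only two potentially costly steps are Step~\ref{alg5} (the treewidth approximation via Proposition~\ref{prop:treewidth}) and Step~\ref{alg7} (the weighted-path dynamic program via Proposition~\ref{prop:weightedPath}).

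The key observation I would establish next is that the maximum degree $\Delta$ of $G^\star$ is bounded by an absolute constant. Indeed, in $G^\star$ the vertex $c_{(i,j)}$ has neighbors only inside $f^{-1}(i,j)\cap V(G^\star)$, which has size at most $|\ma^\star(i,j)|+1\leq 10^{10}+1$ by Observation~\ref{obs:sizeMa}. A marked vertex $u\in\ma^\star(i,j)$ is adjacent in $G^\star$ only to vertices of $V(G^\star)$ lying in the at most $25$ cells $(i',j')$ with $|i-i'|\leq 2,|j-j'|\leq 2$ (by Property~\ref{condition:GridClique2} of Definition~\ref{def:GridClique}), and within each such cell there are at most $10^{10}+1$ candidates; moreover the potential edge $\{c_{(i',j')},u\}$ with $(i',j')\neq(i,j)$ is excluded by the construction in Step~\ref{alg6b}. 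Altogether $\Delta\leq 25\cdot(10^{10}+1)=\OO(1)$.

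With $\Delta=\OO(1)$ in hand, Proposition~\ref{prop:treewidth} invoked with parameter $100\Delta^3\sqrt{2k}=\OO(\sqrt{k})$ runs in time $2^{\OO(\sqrt{k})}\cdot |V(G^\star)|$. Since $|V(G^\star)|$ is dominated by the sum over nonempty cells of $|\ma^\star(i,j)|+1$, and each such cell contains at least one disk of $G$, we have $|V(G^\star)|\leq \OO(n)$. So Step~\ref{alg5} fits in $2^{\OO(\sqrt{k})}(n+m)$. Step~\ref{alg6} either terminates immediately or passes a tree decomposition of width $\OO(\sqrt{k})$ to Step~\ref{alg7}, which by Proposition~\ref{prop:weightedPath} again takes $2^{\OO(\sqrt{k})}\cdot |V(G^\star)|=2^{\OO(\sqrt{k})}\cdot n$ time.

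The only subtlety worth flagging is the constant-degree bound, since without the edge-deletion performed in Step~\ref{alg6b} the vertices $c_{(i,j)}$ could have degrees growing with the number of marked vertices in neighboring cells, and the treewidth parameter $100\Delta^3\sqrt{2k}$ would no longer be $\OO(\sqrt{k})$. Once constancy of $\Delta$ is verified as above, summing the contributions of the eight steps yields the claimed $2^{\OO(\sqrt{k})}(n+m)$ bound.
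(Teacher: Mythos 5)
Your proof is correct and follows essentially the same route as the paper: bound the maximum degree $\Delta$ of $G^\star$ by an absolute constant using Observation~\ref{obs:sizeMa} together with the clique-grid property, then apply Propositions~\ref{prop:treewidth} and~\ref{prop:weightedPath} with $\tw = \OO(\Delta^3\sqrt{k}) = \OO(\sqrt{k})$. (The paper happens to bound the maximum degree of $G'$ and uses $\Delta\leq\Delta'$, an immaterial variation.)

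The one thing to correct is the ``subtlety'' you flag at the end: the edge deletion of Step~\ref{alg6b} is \emph{not} what saves the constant-degree bound. Even in $G'$, a vertex $c_{(i,j)}$ is adjacent only to vertices of $V(G')$ lying in the at most $25$ cells $(i',j')$ with $|i-i'|\leq 2$, $|j-j'|\leq 2$ (Property~\ref{condition:GridClique2} of Definition~\ref{def:GridClique}), and each such cell contributes at most $|\ma^\star(i',j')|+1 \leq 10^{10}+1$ vertices of $V(G')$ (Observation~\ref{obs:sizeMa}), so its degree in $G'$ is already $\OO(1)$. The role of Step~\ref{alg6b} is in the correctness proof (Lemma~\ref{lem:correctness}), specifically the backward direction: it guarantees that on a path $P^\star$ in $G^\star$, the neighbors of $c_{(i,j)}$ lie in $f^{-1}(i,j)$, so that replacing $c_{(i,j)}$ by a traversal of $f^{-1}(i,j)\setminus\ma^\star(i,j)$ yields a path in $G$.
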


\begin{proof}
By Proposition \ref{prop:cliqueGrid} and Observation \ref{obs:sizeMa}, Steps \ref{alg1} and \ref{alg2} are performed in time $\OO(n+m)$. By the definition of $G'$, $w$ and $G^\star$, they can clearly be computed in time $\OO(n+m)$ as well (Steps \ref{alg3}, \ref{alg4} and \ref{alg6b}). Moreover, Step \ref{alg6} is done in time $\OO(1)$. By Proposition \ref{prop:treewidth}, Step \ref{alg5} is performed in time $2^{\OO(100\Delta^3\sqrt{2k})}n=2^{\OO(\Delta^3\sqrt{k})}n$. 
Thus, because we reach Step \ref{alg7} only if we do not terminate in Step \ref{alg6}, we have that by Proposition \ref{prop:weightedPath}, Step \ref{alg7} is performed in time $2^{\OO(\tw(G^\star))}n =2^{\OO(500\Delta^3\sqrt{2k})}=2^{\OO(\Delta^3\sqrt{k})}n$.

Thus, to conclude the proof, it remains to show that $\Delta=\OO(1)$. Let $\Delta'$ be the maximum degree of $G'$. 
Since $G^{\star}$ is a subgraph of $G'$, $\Delta\leq \Delta'$. Thus, to prove $\Delta=\OO(1)$, it is enough to prove that $\Delta'=\OO(1)$. To this end, let $M=\max_{(i,j)\in[t]\times[t]}|(f^{-1}(i,j)\cap V(G'))\cup(\{c_{(i,j)}\}\setminus\{\nil\})|$. Since $G'$ is a clique-grid, by Property  \ref{condition:GridClique2} in Definition \ref{def:GridClique}, we have that $\Delta'\leq M^{25}$, hence it suffices to show that $M=\OO(1)$. The definition of $G'$ yields that $M\leq \max_{(i,j)\in[t]\times[t]}|\ma^\star(i,j)|+1$. By Observation \ref{obs:sizeMa}, $\max_{(i,j)\in[t]\times[t]}|\ma^\star(i,j)|=\OO(1)$, and therefore indeed $M=\OO(1)$.
\end{proof}

Finally, we prove that the algorithm is correct.

\begin{lemma}\label{lem:correctness}
\alg\ solves {\sc Long Path} and {\sc Long Cycle} on unit disk graphs correctly.
\end{lemma}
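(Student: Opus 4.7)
The plan is to prove correctness in two directions: if \alg returns \yes, then $G$ actually has a path (resp.~cycle) on at least $k$ vertices (soundness), and if such a path (resp.~cycle) exists in $G$, then \alg returns \yes (completeness).

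For soundness, I would split on the two places where \alg can return \yes. At Step~\ref{alg6}, we know $\tw(G^\star) > 100\Delta^3\sqrt{2k}$. Since $G^\star$ is an induced subgraph of the unit disk graph $G$, it is itself a unit disk graph, and by the bound $\Delta=\OO(1)$ established inside the proof of Lemma~\ref{lem:runtime}, Proposition~\ref{prop:gridUnitDisk} furnishes a grid of side length greater than $\sqrt{2k}$ as a minor of $G^\star$. Such a grid has a Hamiltonian path on at least $2k$ vertices and a cycle of length at least $2k-1\geq k$, and having this grid as a minor implies $G^\star$ itself contains a path (resp.~cycle) on at least $k$ vertices; because $V(G^\star)\subseteq V(G)$ and $E(G^\star)\subseteq E(G)$, the same path (resp.~cycle) lives inside $G$. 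At Step~\ref{alg7}, a witness $P^\star$ of weight at least $k$ exists in $(G^\star,w)$; I would expand $P^\star$ into a path (resp.~cycle) in $G$ by replacing each occurrence of a representative $c_{(i,j)}$ on $P^\star$ by a Hamiltonian traversal of the clique $f^{-1}(i,j)\setminus\ma^\star(i,j)$. This expansion is valid because the edge pruning in Step~\ref{alg6b} ensures every neighbour of $c_{(i,j)}$ in $G^\star$ lies in $f^{-1}(i,j)$, hence is adjacent in $G$ to every vertex of that clique. The number of vertices produced equals the weight of $P^\star$, which is at least $k$.

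For completeness, assume $G$ has a long path (resp.~cycle) and that \alg does not already return \yes at Step~\ref{alg6} (otherwise we are done). I would invoke Lemma~\ref{lem:propertyMa} to fix a path (resp.~cycle) $P$ on at least $k$ vertices in which every cell is good, and then build a weighted witness $P^\star$ in $G^\star$ by walking along $P$, keeping every marked vertex that $P$ visits, and replacing every maximal block of unmarked vertices of a common cell $(i,j)$ by a single copy of $c_{(i,j)}$. To check that $P^\star$ is a valid path (resp.~cycle) in $G^\star$, note that every cross-cell edge of $P$ is good by Lemma~\ref{lem:propertyMaMain} and therefore joins two marked vertices that still belong to $V(G^\star)$ and whose connecting edge is not pruned in Step~\ref{alg6b}; the only other edges of $P^\star$ are of the form $\{c_{(i,j)},u\}$ where $u\in\ma^\star(i,j)$ is a flank of a removed unmarked block, and such within-cell edges survive in $G^\star$. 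Simplicity of $P^\star$ follows from the good-cell property: the unmarked vertices of each cell visited by $P$ form a single contiguous segment, so $c_{(i,j)}$ is inserted at most once. Finally, since $w(c_{(i,j)})=|f^{-1}(i,j)\setminus\ma^\star(i,j)|$ is at least the number of unmarked vertices of cell $(i,j)$ visited by $P$, the total weight of $P^\star$ is at least $|V(P)|\geq k$, so Step~\ref{alg7} returns \yes.

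The step I expect to require the most care is the degenerate case~(i) of Definition~\ref{def:goodCell}, in which $V(P)$ is contained entirely in the unmarked vertices of a single cell $(i,j)$ and hence $P^\star$ has no marked flanks to attach to. For paths this is harmless: $c_{(i,j)}$ on its own is a weighted path of weight $|V(P)|\geq k$. For cycles I would reroute $P$ whenever the cell contains at least one marked vertex $m\in\ma^\star(i,j)$, splicing $m$ into the Hamiltonian cycle of the clique $f^{-1}(i,j)\setminus\ma^\star(i,j)$ to fall back into case~(iii) with $u=v=m$; if the cell has no marked vertex at all, maximality of the matching chosen in Phase~I of the marking scheme forces the cell to be isolated from every other cell of the clique-grid, so it forms its own connected component of $G$ consisting of a clique of size $\geq k$ and this boundary case can be dispatched by a direct check. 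Once this corner is handled, the remainder of the argument is routine bookkeeping on the segment structure of $P$ induced by the good cells.
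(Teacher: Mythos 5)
Your proposal follows the same route as the paper's proof: grid-minor soundness for Step~\ref{alg6}, expansion of the weighted witness by Hamiltonian traversals of the unmarked cliques for Step~\ref{alg7} (using that the pruning in Step~\ref{alg6b} confines the $G^\star$-neighbourhood of each $c_{(i,j)}$ to its own cell), and contraction of a good-cell path or cycle supplied by Lemma~\ref{lem:propertyMa} for completeness. The reorganisation into soundness and completeness rather than the paper's ``condition 1 + iff'' is cosmetic.

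One factual slip must be corrected. You write that ``$G^\star$ is an induced subgraph of the unit disk graph $G$, [hence] it is itself a unit disk graph.'' This is false: $G'$ is the induced subgraph of $G$ (Step~\ref{alg3}), but $G^\star$ is obtained from $G'$ by \emph{deleting} edges (Step~\ref{alg6b}), so it is only a subgraph and, in general, deleting edges can destroy the unit-disk property. Your stated justification for invoking Proposition~\ref{prop:gridUnitDisk} on $G^\star$ therefore does not stand. (The paper also applies the proposition directly to $G^\star$ without comment, so this concern is shared with the source; a clean way around it is to pass to the UDG $G'$, using $\tw(G')\ge\tw(G^\star)$ and noting that the maximum degree of $G'$, not just of $G^\star$, is $\OO(1)$ by Observation~\ref{obs:sizeMa}.) Your final paragraph correctly identifies that case~(i) of Definition~\ref{def:goodCell} is delicate for cycles---contracting $Q=P$ to the single vertex $c_{(i,j)}$ does not yield a cycle in $G^\star$, and the paper's forward direction glosses over this with ``(unless $Q=P$)''---but note that your proposed splice producing $u=v=m$ still collapses, after contraction, to a two-vertex object, so it too needs the isolated-clique preprocessing check rather than a rerouting argument.
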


\begin{proof}
Let $(G,k)$ be an instance of {\sc Long Path} or {\sc Long Cycle} on unit disk graphs. By the specification of the algorithm, to prove that it solves $(G,k)$ correctly, it suffices to prove that the two following conditions are satisfied.
\begin{enumerate}
\item If $\tw(G^{\star})>100\Delta^3\sqrt{2k}$, then $(G,k)$ is a \yes-instance\ of {\sc Long Path} and {\sc Long Cycle}.
\item $(G,k)$ is a \yes-instance of {\sc Long Path} (resp.~{\sc Long Cycle}) if and only if $(G^\star,w,k)$ is a \yes-instance of {\sc Weighted Long Path} (resp.~{\sc Weighted Long Cycle}).
\end{enumerate}

For the proof of satisfaction of the first condition, suppose that $\tw(G^{\star})>100\Delta^3\sqrt{2k}$. Then, by Proposition~\ref{prop:gridUnitDisk}, $G^{\star}$ contains a $\sqrt{2k}\times\sqrt{2k}$-grid as a minor. Clearly, a $\sqrt{2k}\times\sqrt{2k}$-grid contains a cycle (and hence also a path) on $k$ vertices. By the definition of minor, this means that $G$ contains  cycle (and hence also a path) on at least $k$ vertices, and therefore $(G,k)$ is a \yes-instance\ of {\sc Long Path} and {\sc Long Cycle}. 

Now, we turn to prove the second condition. In one direction, suppose that $(G,k)$ is a \yes-instance of {\sc Long Path} (resp.~{\sc Long Cycle}). Then, by Lemma \ref{lem:propertyMa}, $G$ has a path (resp.~cycle) $P$ on at least $k$ vertices with the following property: every cell $(i,j)\in[t]\times[t]$ is good. Notice that every maximal subpath $Q$ of $P$ that consists only of unmarked vertices satisfies {\em (i)} $V(Q)=f^{-1}(i_Q,j_Q)\setminus \ma^\star(i_Q,j_Q)$ for some cell $(i_Q,j_Q)\in [t]\times[t]$, and {\em (ii)} the endpoints of $Q$ are adjacent in $P$ to vertices in $f^{-1}(i_Q,j_Q)$ (unless $Q=P$). Obtain $P^\star$ from $P$ as follows: every maximal subpath $Q$ of $P$ that consists only of unmarked vertices is replaced by $c_{(i_Q,j_Q)}$. (Notice that  $c_{(i_Q,j_Q)}\neq\nil$ because $V(Q)\neq\emptyset$.) Because of Property {\em (ii)} above and Property \ref{condition:GridClique1} in Definition \ref{def:GridClique}, we immediately have that $P^\star$ is a path (resp.~cycle) in $G^\star$. Moreover, by Property {\em (i)} above and the definition of the weight function $w$ (in Step \ref{alg4}), each subpath $Q$ is replaced by a vertex $c_{(i_Q,j_Q)}$ whose weight equals $|V(Q)|$. Because $|V(P)|\geq k$, we have that $P^\star$ is a path (resp.~cycle) of weight at least $k$ in $G^\star$. Thus, $(G^\star,w,k)$ is a \yes-instance of {\sc Weighted Long Path} (resp.~{\sc Weighted Long Cycle}).

In the other direction, suppose that $(G^\star,w,k)$ is a \yes-instance of {\sc Weighted Long Path} (resp.~{\sc Weighted Long Cycle}). Then, $G^\star$ has a path (resp.~cycle) $P^\star$ of weight at least $k$. 
Obtain $P$ from $P^\star$ by replacing each vertex of the form $c_{(i,j)}\in V(P)$ for some $(i,j)\in[t]\times[t]$ by a path $Q$ whose vertex set is $f^{-1}(i,j)\setminus\ma^\star(i,j)$ (the precise ordering of the vertices on this path is arbitrary). Notice that because all edges in $\{\{c_{(i,j)},v\}\in E(G'): (i,j)\in[t]\times[t], v\notin f^{-1}(i,j)\}$ were removed from $G'$ to derive $G^\star$, each vertex of the form $c_{(i,j)}\in V(P)$ for some $(i,j)\in[t]\times[t]$ is adjacent in $P^\star$ only to vertices in $\ma^\star(i,j)$. Therefore, by Property \ref{condition:GridClique1} in Definition \ref{def:GridClique}, we have that $P$ is a path (resp.~cycle) in $G$. Moreover, by the definition of the weight function $w$ (in Step \ref{alg4}), each vertex $c_{(i,j)}$ was replaced by $w(c_{(i,j)})$ vertices. Because the weight of $P^\star$ is at least $k$, we have that $P$ is a path (resp.~cycle) on at least $k$ vertices in $G$. Thus, $(G,k)$ is a \yes-instance of {\sc Long Path} (resp.~{\sc Long Cycle}).
\end{proof}

Thus, Theorem \ref{thm:main} follows from Lemmas \ref{lem:runtime} and \ref{lem:correctness}.

\bibliography{references}
\end{document}